\def\ps@pprintTitle{%
   \let\@oddhead\@empty
   \let\@evenhead\@empty
   \def\@oddfoot{\reset@font\hfil\thepage\hfil}
   \let\@evenfoot\@oddfoot
}
\newtheorem{remark}{Remark}
\newtheorem{theorem}{Theorem}
\newtheorem{lemma}[theorem]{Lemma}
\newtheorem{example}[theorem]{Example}
\newtheorem{conj}[theorem]{Conjecture}
\newcommand{\rank}{{\mathrm{rank}}}
\newcommand{\tr}{{\mathrm{Tr}}}
\newcommand{\gf}{{\mathbb{F}}}
\newcommand{\support}{{\mathrm{suppt}}}
\newcommand{\wt}{{\mathtt{wt}}}
\newcommand{\C}{{\mathcal{C}}}
\newcommand{\cC}{{\mathcal{C}}}
\newcommand{\bc}{{\mathbf{c}}}
\newcommand{\bg}{{\mathbf{g}}}
\begin{document}

\begin{frontmatter}



\title{Constructions of cyclic codes and extended primitive cyclic codes with their applications}

\tnotetext[fn1]{
This research was supported in part by the National Natural
Science Foundation of China under Grant 12271059, in part by
the Young Talent Fund of University Association for Science and
Technology in Shaanxi, China, under Grant 20200505, and in part
by the Fundamental Research Funds for the Central Universities,
CHD, under Grant 300102122202.
}

\author{Ziling Heng}
\ead{zilingheng@chd.edu.cn}
\author{Xinran Wang$^{\ast}$}
\ead{wangxr203@163.com}
\author{Xiaoru Li}
\ead{lixiaoru@163.com}

\cortext[cor]{Corresponding author}
\address{School of Science, Chang'an University, Xi'an 710064, China}





\begin{abstract}
Linear codes with a few weights have many nice applications including combinatorial design, distributed storage system, secret sharing schemes and so on.
In this paper, we construct two families of linear codes with a few weights based on special polynomials over finite fields. The first family of linear codes are extended primitive cyclic codes which are affine-invariant. The second family of linear codes are reducible cyclic codes. The parameters of these codes and their duals are determined.
  As the first application, we prove that these two families of linear codes hold $t$-designs, where $t=2,3$. As the second application, the minimum localities of the codes are also determined and optimal locally recoverable codes are derived.
\end{abstract}

\begin{keyword}
Linear code \sep cyclic code \sep extended primitive cyclic code

\MSC  94B05 \sep 94A05

\end{keyword}

\end{frontmatter}
\section{Introduction}
Let $\gf_q$ be the finite field with $q$ elements, where $q$ is a power of a prime. Let $\mathbb{F}_{q}^*:=\mathbb{F}_{q}\setminus \{0\}$. Let $\C$ be a non-empty set such that $\mathcal{C} \subseteq \mathbb{F}_{q}^{n}$. If $\C$ is a $k$-dimensional linear subspace over $\mathbb{F}_{q}$, then $\mathcal{C}$ is called an $[n,k,d]$ linear code over $\mathbb{F}_q$, where $d$ denotes its minimum distance. In particular, if any codeword $(c_0,c_1,\cdots,c_{n-1})\in \C$ implies $(c_{n-1},c_0,\cdots,c_{n-2})\in \C$, then
$\C$ is called a cyclic code.
The dual of an $[n,k,d]$ linear code $\C$ is defined by
$$
\mathcal{C}^{\perp}=\left\{ \textbf{u} \in \mathbb{F}_{q}^{n}: \langle  \textbf{u}, \mathbf{c} \rangle=0\mbox{ $\forall$ }\mathbf{c} \in \mathcal{C} \right\},
$$
where $\langle \textbf{u},\mathbf{c} \rangle$ denotes the standard inner product of $\textbf{u}$ and $\mathbf{c}$. It is obvious that $\mathcal{C}^{\perp}$ is an $[n,n-k]$ linear code.
Let $A_{i}$ denote the number of codewords with weight $i$ in a linear code of length $n$, where $0\leq i \leq n$. Then $A(z)=1+A_{1}z+A_{2}z^2+ \cdots +A_{n}z^n$ is referred to as the weight enumerator of $\mathcal{C}$. The sequence $(1,A_1,\cdots,A_n)$ is called the weight distributions of $\mathcal{C}$. The weight enumerator can be used not only to characterize the error detection and correction capabilities of linear codes, but also to calculate the error rate of error correction and detection. The weight enumerator of linear codes including cyclic codes has been studied in a large number of literatures in recent years \cite{T4,W0,H, HCXL, HLW, LYL1, W6, LYL, W3,W4}.

Let $\kappa,t$ and $n$ be positive integers with $1 \leq t \leq \kappa \leq n$. Let $\mathcal{P}$ be a set of $n$ elements and $\mathcal{B}$ be a set of $\kappa$-subsets of $\mathcal{P}$. The pair $\mathbb{D}=(\mathcal{P},\mathcal{B})$ is called a $t$-$(n,\kappa,\lambda)$ design, or simply $t$-design, if each $t$-subset of $\mathcal{P}$ is contained in precisely $\lambda$ elements of $\mathcal{B}$. The elements of $\mathcal{P}$ are called points and the elements of $\mathcal{B}$ are referred to as blocks. A $t$-design without repeated blocks is said to be simple. A $t$-design is called a Steiner system if $\lambda =1$ and $t \geq 2$, which is denoted by $S(t,\kappa,n)$. Some Steiner systems have been constructed  in \cite{T2, SD, SDT, WTD, 32D, SXT}.

Linear codes can be used to constructed $t$-designs. The well-known coding-theoretic construction is  described below. Let $\mathcal{P}=\{ 1,2,\cdots,n \}$ be a set of coordinate positions of the codewords of linear code $\mathcal{C}$ with length $n$. The support of a codeword $\textbf{c}=\{ c_1,c_2,\cdots,c_n \}$ in $\mathcal{C}$ is defined by $\support(\textbf{c})=\{1 \leq i \leq n : c_i \neq 0 \}$. Let $\mathcal{B}_\kappa$ denote the set of supports of all codewords with Hamming weight $\kappa$ in $\mathcal{C}$. The pair $(\mathcal{P},\mathcal{B}_\kappa)$ may be a $t$-$(n,\kappa,\lambda)$ design for some positive integer $\lambda$, which is referred to as a support design of $\mathcal{C}$. In other words, we say that the codewords with weight $\kappa$ in $\mathcal{C}$ support a $t$-$(n,\kappa,\lambda)$ design. When the pair $(\mathcal{P},\mathcal{B}_\kappa)$ is a simple $t$-$(n,\kappa,\lambda)$ design, we have the following relation:
\begin{eqnarray}\label{eqn-t}
|\mathcal{B}_\kappa|=\frac{1}{q-1}A_\kappa, \binom{n}{t}\lambda=\binom{\kappa}{t} \frac{1}{q-1}A_\kappa.
\end{eqnarray}

The following theorem developed by Assmus and Mattson gives a sufficient condition such that the pair $(\mathcal{P},\mathcal{B}_\kappa)$ defined in a linear code $\mathcal{C}$ is a $t$-design.

\begin{theorem}\label{the-AM}\cite{AM}
(Assmus-Mattson Theorem) Let $\mathcal{C}$ be an $[n,k,d]$ code over $\mathbb{F}_{q}$, and let $d^\perp$ denote the minimum distance of $\mathcal{C}^\perp$. Let $w$ be the largest integer satisfying $w \leq n$ and
\begin{eqnarray*}
w-\left\lfloor \frac{w+q-1}{q-2} \right\rfloor < d.
\end{eqnarray*}
Define $w^\perp$ analogously with $d^\perp$. Let $(A_0,A_1,\cdots,A_n)$ and $(A_0^\perp,A_1^\perp,\cdots,A_n^\perp)$ be the weight distributions of $\mathcal{C}$ and $\mathcal{C}^\perp$, respectively. Let $t$ be a positive integer with $t<d$ such that there are at most $d^\perp-t$ weights of $\mathcal{C}$ in the sequence $(A_0,A_1,\cdots,A_{n-t})$. Then
\begin{enumerate}
\item $(\mathcal{P},\mathcal{B}_\kappa)$ is a simple $t$-design provided that $A_\kappa \neq 0$ and $d \leq \kappa \leq w$;

\item $(\mathcal{P},\mathcal{B}_\kappa^\perp)$ is a simple $t$-design provided that $A_\kappa^\perp \neq 0$ and $d^\perp \leq \kappa \leq w^\perp$, where $\mathcal{B}_\kappa^\perp$ denotes the set of supports of all codewords of weight $\kappa$ in $\mathcal{C}^\perp$.
\end{enumerate}
\end{theorem}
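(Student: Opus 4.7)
My plan is to reduce the $t$-design assertion for $(\mathcal{P},\mathcal{B}_\kappa)$ in item~(1) to the claim that, for every $T\subseteq\{1,\dots,n\}$ with $|T|=t$, the quantity
\[
N_\kappa(T)=\#\{\bc\in\C:\wt(\bc)=\kappa,\ \support(\bc)\supseteq T\}
\]
depends only on $|T|$. Granted this uniformity, equation~(\ref{eqn-t}) immediately supplies the block-frequency $\lambda$, and item~(2) then falls out by applying the same argument with the roles of $\C$ and $\C^\perp$ exchanged. Thus the whole proof reduces to proving the $T$-uniformity of $N_\kappa(T)$.

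The first step is to re-express $N_\kappa(T)$ via inclusion--exclusion as a fixed rational linear combination (with coefficients depending only on $n,\kappa,t$) of the weight-frequencies $A_i(\C_T)$ of the shortened code $\C_T=\{\bc\in\C:c_j=0\ \forall j\in T\}$. Because $t<d$, no nonzero codeword of $\C$ is supported in $T$, and the standard duality identifies $(\C_T)^\perp$ with the puncturing of $\C^\perp$ at $T$; hence the MacWilliams identity recovers the weight enumerator of $\C_T$ from that of the punctured dual. I would then bring in the Pless power-moment identities, which express the low-degree moments $\sum_i i^s A_i^\perp$ (for $0\leq s\leq t$) of $\C^\perp$ in terms of $A_0,\dots,A_s$ of $\C$; combining these with the MacWilliams step forces the first $t+1$ moments of the weight distribution of the punctured dual to depend on $|T|$ but \emph{not} on the particular set $T$.

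The hypothesis that at most $d^\perp-t$ of the entries $A_0,\dots,A_{n-t}$ are nonzero now intervenes decisively: the vector $(A_i(\C_T))_{0\leq i\leq n-t}$ has at most $d^\perp-t$ nonzero coordinates, so it is pinned down by any $d^\perp-t$ linearly independent $T$-independent linear constraints imposed on it. The MacWilliams relations together with the Pless moments produce exactly such a set, and a Vandermonde-type computation shows that the resulting square system is nonsingular. Its solution is therefore $T$-independent, and so is $N_\kappa(T)$. The inequality $w-\lfloor(w+q-1)/(q-2)\rfloor<d$ is what ensures that the only weights which can actually contribute lie in $[d,w]$: it governs how Hamming weights can change under shortening over $\gf_q$ and so pins down the exact range $d\leq\kappa\leq w$ in the conclusion.

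The main obstacle will be proving nonsingularity of that square linear system; everything else is a careful but essentially mechanical bookkeeping of MacWilliams and Pless coefficients. This is the step where the constant $q-2$ in the definition of $w$ enters substantively, and where the qualitative difference between small $q$ and large $q$ manifests itself. Once invertibility is in hand, both parts of the theorem follow as described.
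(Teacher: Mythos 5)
The paper contains no proof of this statement: it is the classical Assmus--Mattson theorem, quoted as a black box (the citation \cite{AM} even points to an unrelated reference on APN functions), so your attempt can only be measured against the standard textbook proof, e.g.\ the one in Huffman and Pless. Your skeleton does match that proof: reducing to the $T$-uniformity of $N_\kappa(T)$, passing by inclusion--exclusion to shortened codes, identifying $(\C_T)^\perp$ with the punctured dual, and pinning down the at most $d^\perp-t$ unknown weight frequencies by combining the MacWilliams/Pless power-moment identities with the hypothesis on $(A_0,\dots,A_{n-t})$ is exactly the standard route, and the nonsingularity you worry about is genuine but routine, since the resulting coefficient matrix is a generalized Vandermonde matrix in the distinct nonzero weights.

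The genuine gap is your reading of the hypothesis $w-\lfloor(w+q-1)/(q-2)\rfloor<d$. It has nothing to do with ``how Hamming weights change under shortening.'' Its role is to guarantee that any two codewords of $\C$ of equal weight $\kappa\le w$ with the same support are scalar multiples of one another: if they were not, then for some nonzero scalar $\lambda$ the difference $\bc_1-\lambda\bc_2$ would (by pigeonhole over the $q-1$ possible ratios at each common coordinate) be a nonzero codeword of weight strictly less than $d$, a contradiction. This proportionality lemma is precisely what converts your constant count of \emph{codewords} of weight $\kappa$ containing $T$ into a constant count of \emph{distinct blocks} containing $T$ (division by $q-1$), what makes the resulting design \emph{simple}, and what justifies the identity $|\mathcal{B}_\kappa|=A_\kappa/(q-1)$ from equation (\ref{eqn-t}) that you invoke --- an identity the paper states only under the assumption that the design is already simple, so appealing to it at the outset is circular. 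It is also the only reason the conclusion is restricted to the range $d\le\kappa\le w$, a restriction your argument otherwise leaves unexplained. You need to state and prove this proportionality lemma, relocate the $w$-hypothesis there, and only then perform the division by $q-1$; the rest of your plan can stand.
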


The Assmus-Mattson Theorem is a powerful tool for construction $t$-design from linear codes \cite{T1,T2,T3,T4,T9,D,T5,T6,T7,32D,T8,T10}.

We can also use the automorphism group approach to obtain $t$-designs from linear codes. We review the automorphism group of linear codes for introducing this approach. The set of coordinate permutations that map a code $\cC$ to itself forms a group denoted by PAut($\cC$). PAut($\cC$) is called the permutation automorphism group of $\cC$. If $\cC$ is a code with length $n$, then PAut($\cC$) is a subgroup of the symmetric group Sym($n$). A monomial matrix over $\gf_q$ is a square matrix which has exactly one nonzero element of $\gf_q$ in each row and column. A monomial matrix $M$ can be written either in the form $PD$ or the form $D'P$ with $P$ a permutation matrix and $D$ and $D'$ being diagonal matrices. The set of monomial matrices that map $\cC$ to itself forms a group denoted as MAut($\cC$). MAut($\cC$) is called  the monomial automorphism group of $\cC$. It is obvious that PAut($\cC$) $\subseteq$ MAut($\cC$). The automorphism group Aut($\cC$) of $\cC$ is a set of maps with form $M\sigma$ that map $\cC$ to itself, where $M$ is a monomial matrix and $\sigma$ is a field automorphism. Then we have PAut($\cC$) $\subseteq$ MAut($\cC$) $\subseteq$ Aut($\cC$). Note that PAut($\cC$), MAut($\cC$) and Aut($\cC$) are the same in the binary case.

Clearly, every element in Aut($\cC$) has the form $DP\sigma$, where $D$ is a diagonal matrix, $P$ is a permutation matrix and $\sigma$ is an automorphism of $\gf_q$. If for every pair of $t$-element ordered sets of coordinates, there exists an element $DP\sigma$ in Aut($\cC$) such that its permutation part $P$ sends the first set to the second set, then  Aut($\cC$) is called $t$-transitive.
The following gives a sufficient condition for a linear code to hold $t$-designs.

\begin{theorem}\label{t-trans}\cite{FE}
Let $\cC$ be a linear code of length $n$ over $\gf_q$. If Aut($\cC$) is $t$-transitive,
then the codewords of any weight $i \geq t$ of $\cC$ hold a $t$-design.
\end{theorem}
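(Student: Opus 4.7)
The plan is to realise $\Aut(\cC)$ as a group acting simultaneously on the coordinate set $\cP = \{1,2,\ldots,n\}$ and on the family $\cB_i$ of supports of weight-$i$ codewords of $\cC$, and then to read off the $t$-design property directly from $t$-transitivity. If $A_i = 0$ the conclusion is vacuous, so I would assume $A_i > 0$ throughout.

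The central observation is that any $\phi = DP\sigma \in \Aut(\cC)$, with $D$ a nonsingular diagonal matrix, $P$ a permutation matrix, and $\sigma \in \Aut(\gf_q)$, affects the support of a vector only through $P$: scaling by $D$ and applying $\sigma$ coordinatewise both preserve the set of positions on which a vector is nonzero. Hence, for a codeword $\mathbf{c}$ of weight $i$, the codeword $\phi(\mathbf{c}) \in \cC$ is again of weight $i$ and has support $P(\support(\mathbf{c}))$. I would record this as a small lemma: the assignment $S \mapsto P(S)$ is a well-defined bijection of $\cB_i$ onto itself (its inverse is induced by $\phi^{-1}$).

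With this in hand, the remainder is short. Given any two $t$-subsets $T_1, T_2 \subseteq \cP$, the $t$-transitivity hypothesis supplies $\phi = DP\sigma \in \Aut(\cC)$ whose permutation part satisfies $P(T_1) = T_2$. For every $S \in \cB_i$ one has $T_1 \subseteq S$ iff $T_2 = P(T_1) \subseteq P(S)$, so the induced bijection on $\cB_i$ restricts to a bijection between $\{S \in \cB_i : T_1 \subseteq S\}$ and $\{S \in \cB_i : T_2 \subseteq S\}$. These two sets therefore have the same cardinality $\lambda$, which is precisely the condition that $(\cP,\cB_i)$ is a $t$-$(n,i,\lambda)$ design; it is simple because $\cB_i$ is defined as a set of supports rather than a multiset.

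The main point to get right — essentially the only nontrivial step — is the support-preservation observation, which isolates the permutation part $P$ as the sole ingredient of an automorphism that acts nontrivially at the level of supports. Once this is noted, the $t$-design property falls out from the transitivity hypothesis without any weight-distribution bookkeeping, in contrast to the Assmus--Mattson approach invoked earlier.
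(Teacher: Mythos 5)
Your proof is correct: the key observation that the diagonal part $D$ and the field automorphism $\sigma$ of an element $DP\sigma \in \Aut(\cC)$ do not alter supports, so that $\Aut(\cC)$ acts on the block set $\cB_i$ purely through the permutation parts $P$, combined with $t$-transitivity to equate the number of blocks through any two $t$-subsets, is precisely the standard argument for this classical result. The paper itself supplies no proof (it cites Huffman--Pless), and your argument is the same one given there, so there is nothing to correct.
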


The objective of this paper is to construct two families of linear codes with a few weights and study their applications. We first construct the linear codes based on special polynomials over finite fields. The first family of linear codes are extended primitive cyclic codes which are affine-invariant. The second family of linear codes are cyclic codes. The parameters of these codes and their duals are determined.
  As the first application, we prove that these two families of linear codes hold $t$-designs, where $t=2,3$. As the second application, the minimum localities of the codes are also determined and optimal locally recoverable codes are derived.

The remainder of this paper is organized as follows. In Section \ref{sec2}, we introduce some preliminary results on the number of zeros of some equations over finite fields and affine-invariant codes, which will be used in this paper. In Section \ref{sec3}, we construct a class of extended primitive cyclic codes by a special function and determine their parameters. We then derive some infinite families of $2$-designs and $3$-designs from these linear codes. In Section \ref{sec4}, we give another construction of linear codes, which are cyclic codes,  and determine their parameters and weight distributions. It turns out that they hold $3$-designs. In Section \ref{sec5}, we drive some optimal locally recoverable codes from these linear codes. In Section \ref{sec6}, we conclude the paper.

\section{Preliminaries}\label{sec2}

In this section, we will present some preliminary results on the number of zeros of some equations over $\gf_q$ and affine-invariant codes.

\subsection{The number of zeros of some equations over finite fields}

\begin{lemma}\label{root-p}
Let $h$ and $m$ be two integers with $h<m$ and let $q = p^m$ with $p$ a prime. Define a nonzero polynomial of the form
$$ g(x) = \sum_{i=0}^{h}a_{i}x^{p^i}, \ a_{i} \in \gf_q. $$
Denote  by $N_{g}$ the number of zeros of $g(x)$ in $\gf_q$. Then $N_{g} \in \{ 1, p, p^2, p^3, \cdots , p^h \}$.
\end{lemma}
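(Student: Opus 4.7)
The plan is to recognize $g(x)$ as a $p$-linearized polynomial (an additive polynomial) and exploit the resulting $\gf_p$-vector-space structure of its zero set. Once that is observed, the conclusion follows from counting plus a degree bound, with essentially no computation.

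First, I would verify that $g \colon \gf_q \to \gf_q$ is $\gf_p$-linear. Additivity is immediate from the Frobenius identity $(x+y)^{p^i} = x^{p^i} + y^{p^i}$ valid in any field of characteristic $p$, applied term by term to
\[
g(x+y) \;=\; \sum_{i=0}^{h} a_i (x+y)^{p^i} \;=\; \sum_{i=0}^{h} a_i x^{p^i} + \sum_{i=0}^{h} a_i y^{p^i}.
\]
For $\gf_p$-homogeneity, note that every $c \in \gf_p$ satisfies $c^{p^i} = c$ by Fermat's little theorem, so $g(cx) = \sum_{i=0}^{h} a_i c^{p^i} x^{p^i} = c \, g(x)$.

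Next, I would conclude from this linearity that the zero set
\[
V \;=\; \{\, x \in \gf_q : g(x) = 0 \,\}
\]
is an $\gf_p$-subspace of $\gf_q$. Consequently $|V| = p^k$ for some non-negative integer $k$, so $N_g = p^k$.

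Finally, I would bound $k$ from above by a degree argument. Let $j$ be the largest index in $\{0, 1, \ldots, h\}$ with $a_j \neq 0$ (which exists because $g$ is nonzero); then $g(x)$ has degree $p^j \leq p^h$ as a polynomial, and therefore at most $p^h$ roots in $\gf_q$. This forces $p^k \leq p^h$, i.e.\ $k \leq h$. Combined with $0 \in V$ which gives $k \geq 0$, we obtain $N_g \in \{1, p, p^2, \ldots, p^h\}$, as claimed.

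There is no real obstacle here; the proof is a textbook application of the theory of linearized polynomials. The only point that requires a brief justification is the use of characteristic-$p$ arithmetic in establishing additivity and $\gf_p$-homogeneity, and the observation that the kernel of an $\gf_p$-linear map between finite $\gf_p$-vector spaces has cardinality a power of $p$.
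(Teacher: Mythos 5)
Your proof is correct and follows essentially the same route as the paper: the paper also observes that the zero set is an additive subgroup of $(\gf_q,+)$ (hence of order a power of $p$, by Lagrange) and bounds $N_g$ by the degree $p^h$. You simply spell out the linearized-polynomial details (additivity via Frobenius, $\gf_p$-homogeneity, the kernel-is-a-subspace observation) that the paper leaves as "easy to prove."
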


\begin{proof}
It is obvious that $N_{g} \leq p^h $. Let $G$ be the set of zeros of $g(x)$ in $\gf_q$. Then $G\neq \emptyset$ as $0\in G$.
It is easy to prove that $(G,+)$ is a subgroup of $\gf_q$. By Lagrange's Theorem, the order of $G$ divides the order of $\gf_q$. Then we have $N_{g} \in \{ 1, p, p^2, p^3, \cdots , p^h \}$.
\end{proof}

In the following, we let $\alpha$ be a generator of $\gf_q^*$ and give some examples to verify Lemma \ref{root-p}.

\begin{example}\label{exa-1}
Let $p=2, m=5$ and $q=p^m$. Let
$$ g_1(x) = \alpha^2x + \alpha x^2 +\alpha^5x^4. $$
Denote by $N_{g_1}$ the number of zeros of $g_1(x)$ in $\gf_q$. By Magma program, $N_{g_1} = 2$.
\end{example}

\begin{example}\label{exa-2}
Let $p=2, m=4$ and $q=p^m$. Let
$$ g_2(x) = \alpha^3x + \alpha^5 x^2 +\alpha^8x^4 + \alpha^7 x^8.$$
Denote by $N_{g_2}$ the number of zeros of $g_2(x)$ in $\gf_q$. By Magma program, $N_{g_2} = 4$.
\end{example}

\begin{example}\label{exa-3}
Let $p=3, m=4$ and $q=p^m$. Let
$$ g_3(x) = \alpha^5x + \alpha^9 x^3 +\alpha^{12}x^9 + \alpha^{11} x^{27}.$$
Denote by $N_{g_3}$ the number of zeros of $g_3(x)$ in $\gf_q$. By Magma program, $N_{g_3}=9$.
\end{example}

\begin{example}\label{exa-4}
Let $p=2, m=4$ and $q=p^m$. Let
$$ g_4(x) = \alpha^{13}x + \alpha^7 x^2 +\alpha^{10}x^4 + \alpha x^8.$$
Denote by $N_{g_4}$ the number of zeros of $g_4(x)$ in $\gf_q$. By Magma program, $N_{g_4} = 8$.
\end{example}

\begin{example}\label{exa-5}
Let $p=3, m=3$ and $q=p^m$. Let
$$ g_5(x) = \alpha^{14}x + \alpha^{10} x^3 +\alpha^{24}x^9 .$$
Denote by $N_{g_5}$ the number of zeros of $g_5(x)$ in $\gf_q$. By Magma program, $N_{g_5} = 9$.
\end{example}

Let $h$ and $m$ be positive integers with $h<m$ and let $q=p^m$ with $p$ a prime. Now we consider the zeros of the nonzero polynomial
\begin{eqnarray}
f(x) = c + \sum_{i=0}^{h}a_{i}x^{p^i}, \ a_{i},c \in \gf_q,
\end{eqnarray}
in $\gf_q$. Let $g(x)$ be the polynomial defined in Lemma \ref{root-p}. It is obvious that $f(x)=g(x)+c, c \in \gf_q $.

\begin{lemma}\label{lem-0h}
Let $h$ and $m$ be positive integers with $h<m$ and let $q=p^m$ with p a prime. Denote by $N_{f}$ the number of zeros of $f(x)$ in $\gf_q$. Then $N_{f} \in \{0, 1, p, p^2, p^3, \cdots , p^h \}$.
\end{lemma}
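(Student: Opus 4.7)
The plan is to reduce the statement to Lemma \ref{root-p} by observing that the zero set of $f$, when nonempty, is a coset of the zero set of the associated additive polynomial $g(x)=\sum_{i=0}^{h}a_i x^{p^i}$.

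First I would dispose of the trivial case $N_f=0$, which already lies in the claimed set. So I may assume $f$ has at least one zero $x_0 \in \gf_q$. The key observation is that $g$ is a $p$-polynomial (linearized polynomial), so it is $\gf_p$-linear: for all $u,v\in \gf_q$, $g(u+v)=g(u)+g(v)$, and hence $g(u)-g(v)=g(u-v)$. Since $f(x)=g(x)+c$, the constant $c$ cancels in any difference $f(x_1)-f(x_0)=g(x_1-x_0)$.

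Next I would use this to set up a bijection between the zero set of $f$ and the zero set of $g$. If $x_1$ is any zero of $f$, then $g(x_1-x_0)=f(x_1)-f(x_0)=0$, so $x_1-x_0 \in G$, where $G$ denotes the zero set of $g$ in $\gf_q$. Conversely, for every $y\in G$, $f(x_0+y)=g(x_0+y)+c=g(x_0)+g(y)+c=f(x_0)+g(y)=0$, so $x_0+y$ is a zero of $f$. Therefore the zero set of $f$ is exactly the coset $x_0+G$, and in particular $N_f=|G|$.

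Finally I would invoke Lemma \ref{root-p}, which gives $|G|\in\{1,p,p^2,\ldots,p^h\}$, to conclude $N_f\in\{1,p,p^2,\ldots,p^h\}$ when $N_f\geq 1$. Combining with the $N_f=0$ case yields the desired conclusion $N_f\in\{0,1,p,p^2,\ldots,p^h\}$. No step here looks technically demanding; the only thing to be careful about is writing down the coset argument cleanly using the additivity of the $p$-polynomial $g$, which is the whole content of the proof.
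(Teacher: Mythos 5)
Your proposal is correct and matches the paper's own argument: the paper likewise notes that if $f$ has a zero $u$ then $c=-g(u)$, so $f(x)=g(x-u)$ and hence $N_f=N_g$, then invokes Lemma \ref{root-p}; your coset formulation $x_0+G$ is just a slightly more explicit phrasing of the same translation argument.
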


\begin{proof}
If $a_i=0$ for all $0 \leq i \leq h$ and $c \neq 0$, then $N_f=0$.  Now we assume that $(a_0,a_1,\cdots,a_h)\neq (0,0,\cdots,0)$.
If $f(x)$ has a zero $u$ in $\gf_q$, then $f(u) = g(u) + c = 0$ which implies $c = -g(u)$. Thus $f(x)=g(x)+c=g(x)-g(u)=g(x-u)$. This implies that
$N_f=N_g$. By Lemma \ref{root-p}, we have $N_{f} \in \{1, p, p^2, p^3, \cdots , p^h \}$. The desired conclusion follows.
\end{proof}

\begin{lemma}\label{root}\cite{WTD}
Let $q = p^m$, where $p$ is an odd prime, $m \geq 2$. Let $1 \leq s \leq m-1$, $l=gcd(m,s)$. Let $U_{q+1}:=\{x \in \gf_{q^2}: x^{q+1} = 1\}$ and $f(x)=ax + bx^{p^s} + cx^{p^s+1} + u$, where $(a,b,c,u) \in \gf_{q^2}^4 \setminus \{0,0,0,0\}$. Then $f(x)$ has $0, 1, 2$ or $p^l+1$ zeros in $U_{q+1}$.
\end{lemma}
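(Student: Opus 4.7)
The plan is to convert the equation $f(x)=0$ on $U_{q+1}$ into a companion $\gf_p$-linearized equation by exploiting the relation $x^q=x^{-1}$ that holds on $U_{q+1}$. First, I raise the identity $f(x)=0$ to the $q$-th power and substitute $x^q=x^{-1}$; after multiplying through by $x^{p^s+1}$, this yields a conjugate equation
\begin{equation*}
\tilde f(x) := u^q x^{p^s+1} + a^q x^{p^s} + b^q x + c^q = 0,
\end{equation*}
which every zero of $f$ on $U_{q+1}$ must also satisfy.

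Next, I eliminate the top-degree term by forming the combination $u^q f(x) - c\,\tilde f(x)$, reducing the problem to the $\gf_p$-affine linearized equation
\begin{equation*}
A x^{p^s} + B x + C = 0, \qquad A = u^q b - c\,a^q,\ \ B = u^q a - c\,b^q,\ \ C = u^{q+1} - c^{q+1}.
\end{equation*}
Every zero of $f$ in $U_{q+1}$ is a zero of this equation. The analysis then splits on $(A,B,C)$: if $(A,B)=(0,0)$ and $C\neq 0$ the equation has no solutions; if $(A,B,C)=(0,0,0)$ the coefficients of $f$ and $\tilde f$ are proportional and $f$ itself must be handled directly as a norm-type condition on $U_{q+1}$; otherwise, when $(A,B)\neq(0,0)$, the solution set in $\gf_{q^2}$ is either empty or a coset $x_0+K$ of the $\gf_p$-kernel $K$ of $L(x) = A x^{p^s} + Bx$, and standard facts about linearized polynomials give $|K|\in\{1,p^{\gcd(s,2m)}\}$.

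The main obstacle is then counting $|(x_0+K)\cap U_{q+1}|$ in this last case and verifying that it lies in $\{0,1,2,p^l+1\}$ with $l=\gcd(m,s)$. I would proceed by a \emph{pairs argument}: if $x_1,x_2$ are distinct zeros of $f$ in $U_{q+1}$, then $y=x_1/x_2\in U_{q+1}\setminus\{1\}$ satisfies an equation derived by subtracting the two instances of $f=0$, an equation that factors through the action of the Frobenius $x\mapsto x^{p^s}$, whose fixed field inside $\gf_{q^2}$ is $\gf_{p^{\gcd(s,2m)}}$. The appearance of $p^l+1$ (rather than the kernel size $p^{\gcd(s,2m)}$) reflects the restriction to the multiplicative subgroup $U_{q+1}$: the ratios $x_1/x_2$ are forced to lie in the norm-one subgroup $U_{p^l+1}\subseteq \gf_{p^{2l}}^*$ of the subfield $\gf_{p^{2l}}\subseteq\gf_{q^2}$, whose intersection with $U_{q+1}$ has size $p^l+1$ under the appropriate divisibility of $m/l$. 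Pinning down this subtle interplay between the additive structure of $K$ and the multiplicative structure of $U_{q+1}$ is the heart of the proof, and it is where the four possible values $0,1,2,p^l+1$ arise.
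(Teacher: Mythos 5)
First, a point of reference: the paper gives no proof of this lemma --- it is imported wholesale from \cite{WTD} --- so your proposal is competing with an external proof, not an in-paper one. Your opening steps are correct and standard: on $U_{q+1}$ one has $x^q=x^{-1}$, the conjugate equation $\tilde f(x)=u^qx^{p^s+1}+a^qx^{p^s}+b^qx+c^q=0$ is right, and so is the eliminant $Ax^{p^s}+Bx+C=0$ with your $A,B,C$. The trouble is that everything after that is either deferred or incorrect. In the non-degenerate case $(A,B)\neq(0,0)$ the zeros of $f$ on $U_{q+1}$ sit inside a single coset of $\ker(Ax^{p^s}+Bx)$ in $\gf_{p^{2m}}$, of size $p^{\gcd(s,2m)}$; since $\gcd(s,2m)\in\{l,2l\}$ this bound is $p^{l}$ or $p^{2l}$, never $p^{l}+1$. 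In particular, when $\gcd(s,2m)=l$ the non-degenerate case can contribute at most $p^{l}$ zeros, so the value $p^{l}+1$ can only arise from the degenerate case $(A,B,C)=(0,0,0)$ --- precisely the case you dispose of in one clause as ``a norm-type condition''. The case you treat as a footnote is where the interesting count actually lives.

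Second, the mechanism you propose for the ``heart of the proof'' is demonstrably false. Take $p=3$, $m=2$, $s=1$, so $l=1$ and $q=9$: by Theorem \ref{the-02} and the example following it, $\C$ has minimum distance $q-p^l=6$ with $A_6=2400\neq 0$, so there exist tuples $(a,b,c,u)$ for which $f$ has exactly $p^l+1=4$ zeros in $U_{10}$. But $U_{p^l+1}\cap U_{q+1}=U_{\gcd(4,10)}=U_2=\{1,-1\}$, so if all pairwise ratios of zeros lay in this set they would all equal $-1$, which already forces at most two distinct zeros. (Indeed $U_{p^l+1}\subseteq U_{q+1}$ only when $m/l$ is odd, while the count $p^l+1$ occurs for both parities.) Moreover, subtracting $f(x_1)=0$ from $f(x_2)=0$ does not yield an equation in $y=x_1/x_2$ alone: the term $c(x_1^{p^s+1}-x_2^{p^s+1})=cx_2^{p^s+1}(y^{p^s+1}-1)$ leaves a genuine dependence on $x_2$. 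A route that actually works, and explains why $\gcd(s,m)$ rather than $\gcd(s,2m)$ appears, is to parametrize $U_{q+1}$ by the projective line over $\gf_q$ via $x=(t+\theta)/(t+\theta^q)$ with $\theta\in\gf_{q^2}\setminus\gf_q$ and $t\in\gf_q\cup\{\infty\}$; clearing denominators converts $f(x)=0$ into $\alpha t^{p^s+1}+\beta t^{p^s}+\gamma t+\delta=0$ over $\gf_q$, and completing the product reduces this to Bluher's theorem, which states that $x^{p^s+1}+ax+b$ has exactly $0$, $1$, $2$ or $p^{\gcd(s,m)}+1$ roots in $\gf_{p^m}$. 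Your sketch is, in effect, an attempt to reprove Bluher's theorem over the wrong field.
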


\subsection{Affine-invariant codes}\label{subsec2.2}
In this subsection, we introduce affine-invariant codes.

We give the definition of primitive cyclic codes at first. A primitive cyclic code is a cyclic code of length $n = q^m - 1$ over $\gf_q$, where $m$ is a positive integer. Let $R_n$ represent the quotient ring $\gf_q[x]/(x^n - 1)$. Any primitive cyclic code $\cC$ over $\gf_q$ is an ideal of $R_n$ which is generated by a monic polynomial $g(x)$ of the least degree over $\gf_q$. We call this polynomial the generator
polynomial of $\cC$. It can be represented as
$$
g(x)=\prod\limits_{t\in T}(x-\alpha^t),
$$
where $\alpha$ is a generator of $\gf_{q^m}^*$ and $T\subset \{0,1,\cdots,n-1\}$ is a union of some $q$-cyclotomic cosets modulo $n$.

We then introduce the extended primitive cyclic codes. Let $G$ denote the generator matrix of a primitive cyclic code $\mathcal{C}$. Define a matrix $\overline{G}$ by adding a column  to $G$ such that the sum of the elements of each row of $\overline{G}$ is $0$. The matrix $\overline{G}$ is the generator matrix of the extended code of $\mathcal{C}$. The extended code of a primite cyclic code $\mathcal{C}$ is called an extended primitive cyclic code and denoted by $\overline{\mathcal{C}}$.

Define the affine group $GA_1(\gf_q)$ by the set of all permutations $\sigma_{u,v}:x\longmapsto ux+v$ of $\gf_q$, where $u \in \gf_q^*$, $v \in \gf_q$. An affine-invariant code is an extended primitive cyclic code $\overline{\cC}$ such that $GA_1(\gf_q) \subseteq$ PAut($\overline{\cC}$).
It is easy to prove that the group action of $GA_1(\gf_q)$ on $\gf_q$ is doubly transitive, i.e. $2$-transitive. Then by the Theorem \ref{t-trans}, we have the following theorem.

\begin{theorem}\cite{FE}\label{AFF}
For each $i$ with $A_i \neq 0$ in an affine-invariant code $\overline{\cC}$, the supports of the codewords of weight $i$ form a $2$-design.
\end{theorem}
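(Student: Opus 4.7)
The plan is to deduce the theorem as a direct application of Theorem \ref{t-trans} (the $t$-transitivity criterion) once we verify that the automorphism group of $\overline{\cC}$ acts $2$-transitively on the $q$ coordinate positions indexed by $\gf_q$.

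First, I would verify that the affine group $GA_1(\gf_q) = \{\sigma_{u,v}: x \mapsto ux+v : u \in \gf_q^*,\ v \in \gf_q\}$ acts $2$-transitively on $\gf_q$. Given any two ordered pairs of distinct elements $(x_1,x_2)$ and $(y_1,y_2)$ in $\gf_q$, one solves the system $ux_1+v = y_1$, $ux_2+v = y_2$ explicitly to obtain $u = (y_1-y_2)(x_1-x_2)^{-1} \in \gf_q^*$ and $v = y_1 - ux_1$, which defines a unique element $\sigma_{u,v} \in GA_1(\gf_q)$ carrying the first pair to the second. This is a one-line check but it is the combinatorial heart of the argument.

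Next, since $\overline{\cC}$ is affine-invariant, by definition $GA_1(\gf_q) \subseteq \PAut(\overline{\cC})$, and the chain $\PAut(\overline{\cC}) \subseteq \MAut(\overline{\cC}) \subseteq \Aut(\overline{\cC})$ recalled in the introduction shows that $GA_1(\gf_q) \subseteq \Aut(\overline{\cC})$. The permutation part of every $\sigma_{u,v}$ is $\sigma_{u,v}$ itself (its monomial part being the identity diagonal and its field automorphism part trivial), so the $2$-transitivity established above transfers verbatim to the action of the permutation parts of $\Aut(\overline{\cC})$ on the coordinate set. Hence $\Aut(\overline{\cC})$ is $2$-transitive in the sense defined in the paragraph preceding Theorem \ref{t-trans}.

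Finally I would invoke Theorem \ref{t-trans} with $t=2$: for every weight $i \geq 2$ with $A_i \neq 0$, the supports of the codewords of weight $i$ in $\overline{\cC}$ form a $2$-design on the point set $\gf_q$. (Weights $i < 2$ are vacuous or trivial for a $2$-design.) This gives the conclusion of Theorem \ref{AFF}. I do not expect any genuine obstacle; the only thing to be careful about is the cosmetic point that $t$-transitivity in Theorem \ref{t-trans} refers to the \emph{permutation parts} of elements $DP\sigma \in \Aut(\overline{\cC})$, so one must explicitly note that the elements of $GA_1(\gf_q) \subseteq \PAut(\overline{\cC})$ are already pure permutations and therefore coincide with their own permutation parts.
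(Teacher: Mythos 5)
Your proposal is correct and follows exactly the route the paper takes: it observes that $GA_1(\gf_q)\subseteq \PAut(\overline{\cC})$ acts doubly transitively on $\gf_q$ and then applies Theorem \ref{t-trans} with $t=2$. The only difference is that you spell out the explicit solution for $u$ and $v$ and the remark about permutation parts, which the paper leaves as ``easy to prove.''
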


Theorem \ref{AFF} is a very useful tool in constructing $t$-designs from extended primitive cyclic  codes.


\section{A family of extended primitive cyclic codes}\label{sec3}
In this section, let $h$ and $m$ be positive integers with $h<m$ and $q=p^m$ with $p$ a prime. For convenience, let $\dim(\cC)$ and $d(\cC)$ respectively denote the dimension and minimum distance of a linear code $\cC$. Let $\alpha$ be a generator of $\gf_q^*$ and $\alpha_{i}:=\alpha^{i}$ for $1 \leq i \leq q-1$.

Define
\begin{eqnarray}\label{eqn-construction1}
D_{h}
=\left[
\begin{array}{ccccc}
1 & 1 & \cdots & 1  & 1\\
\alpha_1 & \alpha_2 & \cdots & \alpha_{q-1} & 0\\
\alpha_1^p & \alpha_2^p & \cdots & \alpha_{q-1}^p & 0\\
\alpha_1^{p^2} & \alpha_2^{p^2} & \cdots & \alpha_{q-1}^{p^2} & 0\\
 \vdots   & \vdots   & \vdots                &  \vdots     & \vdots          \\
\alpha_1^{p^h} & \alpha_2^{p^h} & \cdots & \alpha_{q-1}^{p^h} & 0
\end{array}
\right],
\end{eqnarray}
$D_h$ is an $h+2$ by $q$ matrix over $\gf_q$. Let $\C_{D_h}$ be the linear code over $\gf_q$ generated by $D_h$. Let $D_h'$ be the $h+2$ by $q-1$ submatrix of $D_h$ obtained by deleting the last column of $D_h$. Then the linear code $\C_{D_h'}$ generated by $D_h'$ is obvious a primitive cyclic code. It is easy to verify that $\C_{D_h}$  is the extended code of $\C_{D_h'}$.
Hence $\C_{D_h}$  is an  extended primitive cyclic code.

 In the following, we study the parameters of $\C_{D_h}$ and its dual $\C_{D_h}^\perp$ and obtain $t$-designs from them.

\begin{theorem}\label{thm-01}
Let $h$ and $m$ be positive integers with $h<m$ and let $q=2^m$. Then $\cC_{D_h}$ is a $[q, h+2, d]$ linear code with at most $h+2$ nonzero weights and $\C_{D_h}^\perp$ is a $[q, q-h-2, 4]$ linear code over $\gf_q$, where
$d \in \{ q-2^h, q-2^{h-1}, \cdots, q-2^j\}$ and $j$ is the least integer such that $2^j \geq h+1$. Moreover, $\C_{D_h}$ is affine-invariant and the supports of
all codewords of any  fixed nonzero weight in $\C_{D_h}$ form a $2$-design. Besides, the minimum weight
codewords of $\C_{D_h}^\perp$ support a $3$-$(q,4,1)$ simple design, i.e. a Steiner system $S(3,4,q)$.
\end{theorem}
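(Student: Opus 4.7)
The plan is to identify $\cC_{D_h}$ with the evaluation code of the $(h+2)$-dimensional polynomial space $\mathcal{V} = \mathrm{span}_{\gf_q}\{1, x, x^2, x^4, \ldots, x^{2^h}\}$ at all points of $\gf_q$ (with the last coordinate being $f(0)$). The dimension is $h+2$ because any nonzero $f \in \mathcal{V}$ has $\deg f \leq 2^h < q$ and so cannot vanish identically on $\gf_q$. The weight of the codeword attached to $f(x) = c + \sum_{i=0}^h a_i x^{2^i}$ equals $q - N_f$, and Lemma \ref{lem-0h} forces $N_f \in \{0, 1, 2, \ldots, 2^h\}$, giving at most $h+2$ nonzero weights. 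To place $d$ in the claimed range, I would exhibit codewords with $N_f = 2^k$ by taking the linearized polynomial $L_W(x) = \prod_{w \in W}(x-w) = \sum_{i=0}^k b_i x^{2^i}$ attached to any $\gf_2$-subspace $W \leq \gf_q$ of dimension $k \leq h$, which has $b_k = 1$ and hence lies in $\mathcal{V}$.

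For the structural claims, first show $\cC_{D_h}$ is extended primitive cyclic by observing that the punctured code on $\gf_q^*$ is closed under $f(x) \mapsto f(\alpha^{-1} x)$, which simply rescales each $a_i$ by $\alpha^{-2^i}$. Then verify affine-invariance via the characteristic-two Frobenius identity $f(ux + v) = \bigl(c + \sum_i a_i v^{2^i}\bigr) + \sum_i (a_i u^{2^i}) x^{2^i}$, which keeps $\mathcal{V}$ invariant under pullback by $\sigma_{u,v}$; hence $GA_1(\gf_q) \subseteq \mathrm{PAut}(\cC_{D_h})$, and Theorem \ref{AFF} supplies the 2-designs. For the dual, $\dim \cC_{D_h}^\perp = q - h - 2$ is automatic; the value $d^\perp = 4$ follows by ruling out codewords of weight $\leq 3$ through a Vandermonde argument on the first three rows $(1), (\beta_i), (\beta_i^2)$ of the parity-check matrix, together with the observation that the all-ones vector on any $\gf_2$-affine 2-flat $\{\beta_1, \beta_2, \beta_3, \beta_1+\beta_2+\beta_3\}$ is a weight-4 codeword (since $\sum \beta_i^{2^k} = 0$ follows from $\sum \beta_i = 0$ via Frobenius).

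The main obstacle is showing the supports of weight-4 dual codewords are \emph{exactly} the $\gf_2$-affine 2-flats, from which the Steiner system follows because each 3-subset of $\gf_q$ lies in a unique affine 2-flat $\{\beta_1, \beta_2, \beta_3, \beta_1+\beta_2+\beta_3\}$. Given $S = \{\beta_1, \beta_2, \beta_3, \beta_4\}$, a weight-4 codeword supported on $S$ requires the $(h+2) \times 4$ constraint matrix to be rank-deficient, and in particular every $4 \times 4$ submatrix must vanish. The key calculation is the determinantal identity
\[
\det\begin{pmatrix} 1 & 1 & 1 & 1 \\ \beta_1 & \beta_2 & \beta_3 & \beta_4 \\ \beta_1^2 & \beta_2^2 & \beta_3^2 & \beta_4^2 \\ \beta_1^4 & \beta_2^4 & \beta_3^4 & \beta_4^4 \end{pmatrix} = (\beta_1 + \beta_2 + \beta_3 + \beta_4)\prod_{1 \leq i < j \leq 4}(\beta_i + \beta_j),
\]
which I would derive from antisymmetry (forcing divisibility by the degree-$6$ Vandermonde $\prod_{i<j}(\beta_i+\beta_j)$), a total-degree count ($1{+}2{+}4 = 7 = 1 + 6$), and specialization at $\beta_1 = 0$ using the classical three-variable Moore determinant to fix the constant. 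Since this $4 \times 4$ minor sits inside the constraint matrix for every $h \geq 2$, its vanishing forces $\sum_i \beta_i = 0$, identifying $S$ as an affine 2-flat; the converse follows from the weight-4 construction above, completing the bijection and the $S(3,4,q)$ structure.
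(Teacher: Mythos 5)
Your proposal follows essentially the same route as the paper's proof: dimension via the degree bound $2^h<q$, the weight spectrum via the zero-count $N_f$ from Lemma \ref{lem-0h}, affine-invariance via the Frobenius expansion of $f(ux+v)$, the dual distance via the $3\times 3$ Vandermonde minor plus the affine $2$-flat dependency, and the Steiner system via the $4\times 4$ determinant identity carrying the factor $\beta_1+\beta_2+\beta_3+\beta_4$ (which the paper asserts without proof and you sketch a derivation of by homogeneity, divisibility by the Vandermonde, and specialization). Two differences are worth recording. First, your subspace-polynomial step proves strictly more than the paper: the paper only sandwiches $d$ between $q-2^h$ and the Singleton bound $q-h-1$, whereas the codeword attached to $L_W(x)=\prod_{w\in W}(x-w)$ for an $h$-dimensional $\gf_2$-subspace $W$ of $\gf_q$ has weight exactly $q-2^h$, so your argument determines $d=q-2^h$ outright and in fact settles the question left open in Remark \ref{rem} for $p=2$ (and, using $\gf_p$-subspaces, for general $p$). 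Second, you correctly note that the minor with rows $1,x,x^2,x^4$ exists only for $h\ge 2$; the paper uses the same minor without comment, and for $h=1$ with $m\ge 3$ the Steiner-system conclusion genuinely fails, since the parity-check matrix then has only three rows, every $4$-subset of $\gf_q$ supports a weight-$4$ dual codeword, and the resulting design is the complete $3$-$(q,4,q-3)$ design rather than $S(3,4,q)$. So your proof is correct wherever the theorem is, and it flags the one case in which the statement itself needs the restriction $h\ge 2$.
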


\begin{proof}
We prove that $\dim(\cC_{D_h})=h+2$ at first. Let $\bg_i$, $1 \leqslant i \leqslant h+2$, represent the $i$-th row of $D_h$. Suppose that there are elements $r_i\in \gf_q$, $1 \leqslant i \leqslant h+2$, such that $\sum_{i=1}^{h+2}r_{i}\bg_i = 0$. Then
\begin{eqnarray*}
\left\{
\begin{array}{c}
r_1+r_2\alpha_1+r_2\alpha_1^{2}+ \cdots +r_{h+2}\alpha_1^{2^h}=0, \\
r_1+r_2\alpha_2+r_2\alpha_2^{2}+ \cdots +r_{h+2}\alpha_2^{2^h}=0, \\
          \vdots                 \\
r_1+r_2\alpha_{q-1}+r_2\alpha_{q-1}^{2}+ \cdots +r_{h+2}\alpha_{q-1}^{2^h}=0, \\
r_1=0.
\end{array}
\right.
\end{eqnarray*}
This implies that the polynomial $f(x)=r_1+r_2x+r_2x^{2}+ \cdots +r_{h+2}x^{2^h}$ has at least $q=2^m$ solutions.
Then $f(x)$ must be a zero polynomial as $h<m$.  In other words, we deduce that $r_i=0, 1 \leqslant i \leqslant h+2$ and $\bg_1$, $\bg_2, \cdots, \bg_{h+2}$ are linearly independent over $\gf_q$. Thus $\dim(\cC_{D_h})=h+2$.

We then prove that $\cC_{D_h}^\perp$ has parameters $[q, q-h-2, 4]$. Obviously, $\dim(\cC_{D_h}^\perp)=q-(h+2)=q-h-2$. Let $x_1, x_2, x_3$ be any three pairwise different elements in $\gf_q$. Consider the following submatrix of $D_{h}$ given by
\begin{eqnarray*}
M_{1}=
\left[
\begin{array}{lll}
1 & 1 & 1 \\
x_1 & x_2 & x_3 \\
\vdots   &      \vdots  & \vdots              \\
x_1^{2^h} & x_2^{2^h} & x_3^{2^h}
\end{array}
\right].
\end{eqnarray*}
Then we  consider the following submatrix of $M_{1}$ given as
\begin{eqnarray*}
M_{2}=
\left[
\begin{array}{lll}
1 & 1 & 1 \\
x_1 & x_2 & x_3 \\
x_1^{2} & x_2^{2} & x_3^{2}
\end{array}
\right].
\end{eqnarray*}
It is obvious that
$$|M_2| = \prod_{1 \leqslant i<j \leqslant 3}(x_j-x_i) \neq 0.$$
Then $\rank(M_{1})=3$ and any 3 columns of $D_h$ are linearly independent. This yields $d(\cC_{D_h}^\perp) \geq 4$. Let $x_1, x_2, x_3$ be any three pairwise different elements in $\gf_q$ and $x_4=x_1+x_2+x_3.$ Consider the following submatrix $M_3$ of $D_{h}$ given as
\begin{eqnarray*}
M_{3}=
\left[
\begin{array}{llll}
1 & 1 & 1 & 1\\
x_1 & x_2 & x_3 & x_4\\
\vdots   &  \vdots  & \vdots   & \vdots  \\
x_1^{2^h} & x_2^{2^h} & x_3^{2^h} & x_4^{2^h}
\end{array}
\right].
\end{eqnarray*}
Let $\bc_i$, $1 \leqslant i \leqslant 4$, represent the $i$-th column of $M_3$. It is obvious that $\bc_4 = \bc_1+\bc_2+\bc_3$, i.e. there exist four columns of $D_h$ which are linearly dependent. Hence, $d(\cC_{D_h}^\perp) = 4$ and $\C_{D_h}^\perp$ is a $[q, q-h-2, 4]$ code.

Now we determine the parameters of $ \cC_{D_h}$.  By definition, we have
$$ \cC_{D_h}=\{ \bc_{c,a_0,a_1,\cdots,a_{h}}: c,a_0,a_1,\cdots,a_{h} \in \gf_q \},$$
where
$$ \bc_{c,a_0,a_1,\cdots,a_{h}}=\left(c + \sum_{i=0}^{h}a_{i}x^{2^i}\right)_{x \in \gf_q}. $$
To determine the weight $\mbox{wt}(\bc_{c,a_0,a_1,\cdots,a_{h}})$ of a nonzero codeword $\bc_{c,a_0,a_1,\cdots,a_{h}} \in \cC_{D_h}$, it is sufficient to determine the number of zeros of the equation
$$ c + \sum_{i=0}^{h}a_{i}x^{2^i}=0$$
in $\gf_q$.
By Lemma \ref{lem-0h}, the above equation has $N_f$ zeros in $\gf_q$, where $N_{f} \in \{0, 1, 2, 4, 8, \cdots , 2^h \}$. Hence, $\mbox{wt}(\bc_{c,a_0,a_1,\cdots,a_{h}})\in \{q,q-1,q-2,q-4,q-8,\cdots,q-2^h\}$.
Then $q-2^h \leq d(\cC_{D_h}) \leq q-h-1$ by the Singleton bound. We then derive that $\cC_{D_h}$ is a $[q, h+2, d]$ code over $\gf_q$, where
$d \in \{ q-2^h, q-2^{h-1}, \cdots, q-2^j\}$, $j$ is the least integer such that $2^j \geq h+1$.

In what follows, we prove that $\cC_{D_h}$ is affine-invariant and holds 2-designs.
Let
$$ f(x):= c + \sum_{i=0}^{h}a_{i}x^{2^i},\ c,a_0,a_1,\cdots,a_h\in \gf_q.$$
For $u \in \gf_q^*, v \in \gf_q.$ We have
$$ f(ux+v) =  c + \sum_{i=0}^{h}a_{i}(ux+v)^{2^i} $$
$$ = c + \sum_{i=0}^{h}a_{i}v^{2^i} + \sum_{i=0}^{h}a_{i}u^{2^i}x^{2^i}$$
Let $\sigma_{(u,v)}(x) = ux+v \in GA_1(\gf_q)$ , where $u \in \gf_q^*$ and $ v \in \gf_q $. Then we have
$$ \sigma_{(u,v)}(\bc_{c,a_0,a_1,\cdots,a_{h}}) = \bc_{c',a_0',a_1',\cdots,a_{h}'} \in \cC_{D_h}, $$
where
$$ c' = f(v), a_i' = a_iu^{2^i} , 0 \leq i \leq h.$$
Note that $\cC_{D_h}$ is an extended primitive cyclic code by the definition in Subsection \ref{subsec2.2}. Then by the discussion above and the definition of PAut($\cC_{D_h}$), it is easy to deduce that $GA_1(\gf_q) \subseteq$ PAut($\cC_{D_h}$). Thus, $\cC_{D_h}$ is affine-invariant. By theorem \ref{AFF}, the supports of all codewords of any fixed nonzero weight in $\cC_{D_h}$ form a 2-design.

Finally, we prove that the minimum weight codewords of $\C_{D_h}^\perp$ support a 3-$(q,4,1)$ simple design, i.e. a Steiner system $S(3,4,q)$.
Taking any four different columns in $D_h$, we then obtain the matrix $M_3$. We now prove that $\rank(M_3)=3$ if and only if $x_4=x_1+x_2+x_3$.
It is obvious that $\rank(M_3)=3$ if $x_4=x_1+x_2+x_3$. Conversely, let $\rank(M_3)=3$ and we suppose $x_4 \neq x_1+x_2+x_3$. Consider the following submatrix of $M_3$ given as
\begin{eqnarray*}
M_{4}=
\left[
\begin{array}{llll}
1 & 1 & 1 & 1\\
x_1 & x_2 & x_3 & x_4\\
x_1^2 & x_2^2 & x_3^2 & x_4^2\\
x_1^4 & x_2^4 & x_3^4 & x_4^4
\end{array}
\right].
\end{eqnarray*}
Note that
$$ |M_{4}|= \prod_{1 \leqslant i<j \leqslant 4}(x_j-x_i)(x_1+x_2+x_3+x_4) \neq 0 $$
implying $\rank(M_3)=4$. This contradicts with $\rank(M_3)=3$.
Therefore, $\rank(M_3)=3$ if and only if $x_4=x_1+x_2+x_3$.
Let $\{x_{i_1}, x_{i_2}, x_{i_3}, x_{i_4}\}$ be any 4-subset of $\gf_q$ that satisfying $x_{i_4}=x_{i_1}+ x_{i_2}+x_{i_3}$, where $1 \leq i_j \leq q$.
Let $(r_{i_1}, r_{i_2}, r_{i_3}, r_{i_4})$ be a nonzero solution of
\begin{eqnarray*}
\left[
\begin{array}{llll}
1 & 1 & 1  & 1\\
x_{i_1} & x_{i_2} & x_{i_3} & x_{i_4}\\
x_{i_1}^2 & x_{i_2}^2 & x_{i_3}^2 & x_{i_4}^2\\
\vdots   &  \vdots  & \vdots   & \vdots  \\
x_{i_1}^{2^h} & x_{i_2}^{2^h} & x_{i_3}^{2^h} & x_{i_4}^{2^h}
\end{array}
\right]
\left[
\begin{array}{l}
r_{i_1}\\
r_{i_2}\\
r_{i_3}\\
r_{i_4}
\end{array}
\right]
=\mathbf{0}.
\end{eqnarray*}
Since rank$(M_{3})=3$, we have $r_{i_j} \neq 0$ for $1 \leq j \leq 4$. Let $\bc=(c_1,c_2, \ldots, c_q)$ be a codeword in $\cC_{D_h}^\perp$,
where $c_{i_j}=r_{i_j}$ and $c_v=0$ for all $v \in \{1,2, \ldots, q\} \setminus \{i_1, i_2, i_3, i_4\}$. It is clear that wt$(\bc)=4$.
Obviously, $\{a\bc:a \in \gf_q^* \}$ is a set of all codewords of weight 4 in $\cC_{D_h}^\perp$ whose nonzero coordinates are in the set $\{i_1, i_2, i_3, i_4 \}$. Therefore,
every codeword of weight 4 and its nonzero multiples in $\cC_{D_h}^\perp$ with nonzero coordinates $\{i_1, i_2, i_3, i_4 \}$ must correspond to the set $\{x_{i_1}, x_{i_2}, x_{i_3}, x_{i_4}\}$.
For every three pairwise distinct elements $x_{i_1}, x_{i_2}, x_{i_3}$ in $\gf_q$, the number of choices of $x_{i_4}$ is equal to 1 and independent of $x_{i_2}, x_{i_3}, x_{i_4}$. We then deduce that the codewords of weight 4
in $\cC_{D_h}^\perp$ support a 3-$(q,4,1)$ design. By Equation (\ref{eqn-t}), we have
$$ A_4^\perp=\frac{q(q-1)^2(q-2)}{24}.$$
The proof is completed.
\end{proof}

In Theorem \ref{thm-01}, the parameters of the $2$-designs derived from $\C_{D_h}$ are not given. It is open to determine them.
According to some examples confirmed by Magma program, we have the following conjecture.
\begin{conj}\label{conj}
The minimum weight codewords of $\C_{D_h}$ in Theorem \ref{thm-01} support $3$-designs.
\end{conj}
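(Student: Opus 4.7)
The strategy is to identify the supports of the minimum-weight codewords of $\cC_{D_h}$ with the complements of the $h$-dimensional affine $\gf_2$-subspaces of $\gf_q=\gf_2^m$, and then to exploit the $3$-transitivity of the full affine group $\AGL_m(\gf_2)$ acting on $\gf_q$.

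\textbf{Step 1: pinning down $d$ and its supports.} First I would sharpen Theorem \ref{thm-01} to the equality $d=q-2^h$. For any $h$-dimensional $\gf_2$-subspace $V\subseteq\gf_q$, the polynomial $g_V(x)=\prod_{v\in V}(x-v)$ is $\gf_2$-linearized, hence of the form $\sum_{i=0}^h b_i x^{2^i}$ with $b_h=1$; its evaluation on $\gf_q$ is a codeword of $\cC_{D_h}$ with exactly $|V|=2^h$ zeros, of weight $q-2^h$. Together with the upper bound on zero counts from Lemma \ref{lem-0h}, this forces $d=q-2^h$. Moreover, the zero set of any minimum-weight codeword $f(x)=c+\sum_{i=0}^h a_i x^{2^i}$ is a coset of the $\gf_2$-kernel of the linearized part $\sum a_i x^{2^i}$, hence an $h$-dimensional affine $\gf_2$-subspace of $\gf_q$; conversely, every such affine subspace $V+u$ is the zero set of the codeword $g_V(x)+g_V(u)\in\cC_{D_h}$. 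A short leading-coefficient argument---both competitors must have nonzero coefficient of $x^{2^h}$, so a suitable scalar combination of degree $<2^h$ vanishes on $2^h$ points and must be zero---shows that two minimum-weight codewords with the same zero set differ by a nonzero scalar. Hence the block set $B_d$ is in bijection, via complementation, with the family $\mathcal{F}_h$ of $h$-dimensional affine $\gf_2$-subspaces of $\gf_q$.

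\textbf{Step 2: $\AGL_m(\gf_2)$-invariance and the design.} The group $G=\AGL_m(\gf_2)$ acts on $\gf_q=\gf_2^m$ by $x\mapsto Ax+b$ with $A\in\GL_m(\gf_2)$ and $b\in\gf_2^m$; it clearly preserves $\mathcal{F}_h$ and therefore permutes the blocks of $B_d$. I would then verify that this action is $3$-transitive on $\gf_q$: after translating a triple $(p_1,p_2,p_3)$ of distinct points so that $p_1\mapsto 0$, the other two images are distinct nonzero vectors of $\gf_2^m$, hence $\gf_2$-linearly independent, and any such ordered pair can be mapped to any other by an element of $\GL_m(\gf_2)$. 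Consequently, for any two $3$-subsets $S,S'\subseteq\gf_q$ some element of $G$ maps $S$ onto $S'$, so the number of blocks of $B_d$ containing $S$ equals the number containing $S'$. Therefore $(\gf_q,B_d)$ is a $3$-$(q,q-2^h,\lambda)$ design, which is exactly the claim of Conjecture \ref{conj}.

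\textbf{Main obstacle.} The delicate work is Step 1: one must invoke the classical fact that $\prod_{v\in V}(x-v)$ is $\gf_2$-linearized for every $\gf_2$-subspace $V$, and establish that a minimum-weight codeword is determined by its support up to a nonzero scalar. Once Step 1 is in hand, Step 2 is purely group-theoretic. As a by-product, a short inclusion-exclusion yields the explicit parameter
\begin{align*}
\lambda=\binom{m}{h}_2\cdot 2^{m-h}-3\binom{m}{h}_2+3\binom{m-1}{h-1}_2-\binom{m-2}{h-2}_2,
\end{align*}
where $\binom{\cdot}{\cdot}_2$ is the Gaussian binomial coefficient; the key input is that over $\gf_2$ any two distinct nonzero vectors are automatically linearly independent, so three pairwise distinct points of $\gf_2^m$ always span a $2$-dimensional affine flat and the count of $h$-dimensional subspaces through any triple is independent of the triple.
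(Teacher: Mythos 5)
The statement you are proving is left \emph{open} in the paper: Conjecture~\ref{conj} is supported there only by Magma computations, and the concluding section explicitly invites the reader to settle it. So there is no proof of record to compare against; your argument has to stand on its own, and as far as I can check it does. The two ingredients you isolate are exactly the right ones. First, the subspace-polynomial fact that $\prod_{v\in V}(x-v)$ is $\gf_2$-linearized of degree $2^h$ for any $h$-dimensional $\gf_2$-subspace $V$ of $\gf_q$ shows that every coset $u+V$ is the zero set of a codeword $g_V(x)+g_V(u)$ of $\cC_{D_h}$; combined with Lemma~\ref{lem-0h} this pins down $d=q-2^h$ (thereby also answering, for $p=2$, the open question on $d(\cC_{D_h})$ raised in Remark~\ref{rem}, and generalizing Theorems~\ref{th-hdivm} and~\ref{th-h=m-1}) and identifies the block set of minimum-weight supports with the complements of the $h$-flats of $\gf_2^m$. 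Second, since that block set is visibly invariant under $\AGL_m(\gf_2)$, which is $3$-transitive on $\gf_2^m$ precisely because any two distinct nonzero vectors over $\gf_2$ are linearly independent, the $3$-design property follows; note this is a strictly larger group than the $GA_1(\gf_q)$ used in the paper to get $2$-designs, which is where the extra transitivity comes from. Your uniqueness-up-to-scalar claim is fine (the difference of two suitably scaled minimum-weight words has linearized part of degree at most $2^{h-1}$ yet $2^h$ zeros, hence vanishes), though it is only needed for counting, not for the design property itself. I also checked your closed form for $\lambda$ against the paper's data: for $h=2$ it reproduces $\lambda=\frac{(q-4)(q-5)(q-6)}{24}$ from Theorem~\ref{h=2}, and the block count $\binom{m}{h}_2\,2^{m-h}$ times $q-1$ reproduces $A_{q-4}$ and $A_{q-8}$ in Theorems~\ref{h=2} and~\ref{thm-1}. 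In short, your proposal does not merely match the paper --- it resolves the conjecture; the only polish needed is to cite the subspace-polynomial theorem (e.g.\ Lidl--Niederreiter, Theorem~3.52) and to state explicitly that $m\geq 2$, which is automatic from $1\leq h<m$.
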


\begin{theorem}\label{thm-p2}
Let $p$ be an odd prime, $h$ and $m$ be positive integers with $h<m$ and $q=p^m$. Then $\cC_{D_h}$ is a $[q, h+2, d]$ code with at most $h+2$ nonzero weights and $\C_{D_h}^\perp$ is a $[q, q-h-2, 3]$ code over $\gf_q$, where
$d \in \{ q-p^h, q-p^{h-1}, \cdots, q-p^j\}$, $j$ is the least integer such that $p^j \geq h+1$. Moreover, $\C_{D_h}$ is affine-invariant and the supports of all codewords of any fixed nonzero weight in $\C_{D_h}$ form a 2-design. Besides, the minimum weight codewords of $\C_{D_h}^\perp$ support a $2$-$(q,3,p-2)$ simple design.
\end{theorem}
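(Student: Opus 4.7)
The plan is to mirror the proof of Theorem \ref{thm-01} with the modifications dictated by odd characteristic. The parts that carry over essentially verbatim after replacing $2^i$ by $p^i$ are the dimension count, the weight enumeration, the verification of affine invariance, and the $2$-design conclusion for $\cC_{D_h}$ itself. For the dimension, a linear relation $\sum r_i \bg_i = 0$ forces the polynomial $r_1 + \sum_{i=0}^h r_{i+2}x^{p^i}$ of degree at most $p^h < q$ to vanish on all of $\gf_q$, hence to be the zero polynomial, so $\dim(\cC_{D_h}) = h+2$. The weight of a nonzero codeword $\bc_{c,a_0,\ldots,a_h}$ equals $q - N_f$, where by Lemma \ref{lem-0h} we have $N_f \in \{0,1,p,p^2,\ldots,p^h\}$; the Singleton bound $d \leq q - h - 1$ then reduces the list of candidate minimum distances to $\{q-p^h,\ldots,q-p^j\}$ with $j$ the least integer such that $p^j \geq h+1$. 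Affine invariance follows from $f(ux+v) = f(v) + \sum_i a_i u^{p^i} x^{p^i}$, which is again of the required form, so $GA_1(\gf_q) \subseteq \PAut(\cC_{D_h})$, and Theorem \ref{AFF} delivers the $2$-design on every nonzero weight.

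The genuinely new content is the minimum distance of $\cC_{D_h}^\perp$ and the parameters of the $2$-design supported by its minimum-weight codewords. The inequality $d^\perp \geq 3$ holds because the $2\times 2$ top minor of any two columns of $D_h$ is the nonzero Vandermonde $x_j - x_i$. For the reverse inequality, I will exhibit three distinct $x_1,x_2,x_3 \in \gf_q$ whose columns are linearly dependent. A dependence relation forces $a_3 = -a_1 - a_2$ from the constant row, and the row with exponent $p^i$ becomes, by Frobenius,
\begin{equation*}
a_1(x_1 - x_3)^{p^i} + a_2(x_2 - x_3)^{p^i} = 0.
\end{equation*}
Solving the $i=0$ equation for $a_1/a_2$ and substituting into the $i \geq 1$ equations collapses every constraint to $\omega^{p^i - 1} = 1$ for $\omega := (x_2 - x_3)/(x_1 - x_3)$; since $(p-1) \mid (p^i - 1)$, all of these are equivalent to the single condition $\omega \in \gf_p^*$, and distinctness of the $x_i$ further excludes $\omega \in \{0,1\}$, leaving $p-2$ admissible values. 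Because $p$ is odd this set is nonempty, so $d^\perp = 3$.

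The same classification describes $\cB_3^\perp$: a $3$-subset $\{x_1,x_2,x_3\} \subseteq \gf_q$ is a block exactly when the three points are collinear on some $\gf_p$-affine line in $\gf_q$. The affine invariance of $\cC_{D_h}$ transfers to $\cC_{D_h}^\perp$ because $\PAut(\cC) = \PAut(\cC^\perp)$, so $\PAut(\cC_{D_h}^\perp)$ is $2$-transitive and Theorem \ref{t-trans} already guarantees a $2$-design, leaving only the computation of $\lambda$. I will count blocks directly: the number of $\gf_p$-affine lines in $\gf_q$ is $q(q-1)/[p(p-1)]$, each carrying $\binom{p}{3}$ blocks, so $|\cB_3^\perp| = q(q-1)(p-2)/6$; formula (\ref{eqn-t}) with $n = q$, $t = 2$, $\kappa = 3$ then gives $\binom{q}{2}\lambda = 3 \cdot q(q-1)(p-2)/6$, whence $\lambda = p-2$.

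The main obstacle is the dual-distance analysis. The reduction from an apparent chain of $h$ constraints $\omega^{p^i - 1} = 1$ to the single condition $\omega \in \gf_p^*$ hinges on the Frobenius identity and the divisibility $(p-1) \mid (p^i - 1)$, and the careful bookkeeping of which values of $\omega$ correspond to genuinely distinct points is precisely what produces the parameter $p-2$ rather than $p-1$ or $p$; the same bookkeeping is what distinguishes the odd-characteristic conclusion from the binary case, where the coincidence $p-1 = 1$ makes three columns independent and forces the obstruction to appear at the level of four columns.
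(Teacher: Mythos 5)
Your proposal is correct and follows essentially the same route as the paper: the dimension count, weight list via Lemma \ref{lem-0h}, and affine invariance carry over from the binary case exactly as you say, and the heart of the matter is the Frobenius-based classification of dependent column triples, where your condition $\omega^{p^i}=\omega$ for $\omega=(x_2-x_3)/(x_1-x_3)$ is the paper's condition $a^{p^i}=a$ written in different coordinates, both collapsing to membership in $\gf_p\setminus\{0,1\}$ and hence to $p-2$ admissible third points. The only difference is cosmetic: the paper reads off $\lambda=p-2$ directly, since ``exactly $p-2$ choices of the third point for each fixed pair'' is already the definition of a $2$-design, whereas you detour through $2$-transitivity of $\PAut(\cC_{D_h}^{\perp})$ and a global count of $3$-subsets of $\gf_p$-affine lines; both yield $|\cB_3^{\perp}|=q(q-1)(p-2)/6$ and $\lambda=p-2$.
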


\begin{proof}
Similarly to Theorem \ref{thm-01}, we can easily derive the paraments of $\cC_{D_h}$. The possible nonzero weights of $\C_{D_h}$ are $\{ q-p^h, q-p^{h-1}, \cdots, q-p^j\}$, where $j$ is the least integer such that $p^j \geq h+1$.
Besides, we can also prove that $\C_{D_h}$ is affine-invariant.
Thus the supports of all codewords of any fixed nonzero weight in $\C_{D_h}$ form a 2-design.

In the following, we prove $\C_{D_h}^\perp$ has parameters $[q, q-h-2, 3]$.
Obviously, $\dim(\cC_{D_h}^\perp)=q-(h+2)=q-h-2$. It is obvious that any two columns are linearly independent in $D_{h}$, which implies $d(\cC_{D_h}^\perp) \geq 3.$ Let $x_1$ be an element in $\gf_q$ and $x_2=ax_1,a \in \gf_p^*\backslash \{1\}$. Consider the following submatrix of $D_{h}$ given as
\begin{eqnarray*}
M_{5}=
\left[
\begin{array}{lll}
1 & 1 & 1 \\
x_1 & x_2 & 0 \\
x_1^p & x_2^p & 0 \\
\vdots   &      \vdots  & \vdots              \\
x_1^{p^h} & x_2^{p^h} & 0
\end{array}
\right].
\end{eqnarray*}
Let $\bc_1$, $\bc_2$, $\bc_3$ represent the first, second, third column of $M_5$, respectively. It is easy to prove that $\bc_2 = a\bc_1 +(1-a)\bc_3$.  Hence, $d(\cC_{D_h}^\perp) = 3$ and $\C_{D_h}^\perp$ has parameters $[q, q-h-2, 3]$.

We now prove that the minimum weight codewords of $\C_{D_h}^\perp$ support a $2$-$(q,3,p-2)$ simple design.
Taking any three columns in $D_h$, we obtain the submatrix  $M_6$, where $x_1,x_2,x_3$ are pairwise distinct elements in $\gf_q$ and
\begin{eqnarray*}
M_{6}=
\left[
\begin{array}{lll}
1 & 1 & 1 \\
x_1 & x_2 & x_3 \\
x_1^p & x_2^p & x_3^p \\
\vdots   &      \vdots  & \vdots              \\
x_1^{p^h} & x_2^{p^h} & x_3^{p^h}
\end{array}
\right].
\end{eqnarray*}
Let $\bg_1$, $\bg_2$, $\bg_3$ represent the first, second, third column of $M_6$, respectively.
We first prove that $\rank(M_6)=2$ if and only if $x_3 = ax_1 +(1-a)x_2, a \in \mathbb{F}_{p} \setminus \{0,1\}$.
If $\bg_3=a\bg_1+(1-a)\bg_2$ if $x_3 = ax_1 +(1-a)x_2, a \in \mathbb{F}_{p} \setminus \{0,1\}$, then $\rank(M_6)=2$. Conversely, we let $\rank(M_6)=2$ and assume that $\bg_3=a\bg_1+b\bg_2, a,b \in \gf_q \setminus \{0\}$. Then we have
\begin{eqnarray}\label{x3}
\left\{
\begin{array}{c}
a+b=1, \\
ax_1+bx_2=x_3, \\
ax_1^p+bx_2^p=x_3^p, \\
          \vdots                 \\
ax_1^{p^h}+bx_2^{p^h}=x_3^{p^h}.
\end{array}
\right.
\end{eqnarray}
By the first two equations in  (\ref{x3}), we have $b=1-a$ and $x_3=ax_1 +(1-a)x_2, a \in \gf_q \setminus \{0,1\}$. Then by System (\ref{x3}) we have
\begin{eqnarray}\label{x3+}
\left\{
\begin{array}{c}
ax_1^p+(1-a)x_2^p=(ax_1+(1-a)x_2)^p, \\
ax_1^{p^2}+(1-a)x_2^{p^2}=(ax_1+(1-a)x_2)^{p^2}, \\
          \vdots                 \\
ax_1^{p^h}+(1-a)x_2^{p^h}=(ax_1+(1-a)x_2)^{p^h}.
\end{array}
\right.
\end{eqnarray}
where $a \in \gf_q \setminus \{0,1\}$. The System (\ref{x3+}) can be rewritten as
\begin{eqnarray*}\label{x3++}
ax_1^{p^i}+(1-a)x_2^{p^i}=(ax_1+(1-a)x_2)^{p^i},\ 1 \leq i \leq h,\ a \in \gf_q \setminus \{0,1\},
\end{eqnarray*}
which implies
\begin{eqnarray*}\label{a}
a^{p^i}(x_1-x_2)^{p^i}=a(x_1-x_2)^{p^i},\ 1 \leq i \leq h,\ a \in \gf_q \setminus \{0,1\}.
\end{eqnarray*}
Then $a^{p^i}=a$ for all $1 \leq i \leq h$. This implies
$$a \in \left(\bigcap_{i=1}^{h} \gf_{p^i}\right) \setminus \{0,1\}= \gf_p \setminus \{0,1\},$$
 and $x_3 = ax_1 +(1-a)x_2, a \in \mathbb{F}_{p} \setminus \{0,1\}.$ It is easy to prove that $x_3 \notin \{x_1,x_2\}$.Therefore, $\rank(M_6)=2$ if and only if $x_3 = ax_1 +(1-a)x_2, a \in \mathbb{F}_{p} \setminus \{0,1\}$.  If we fix $x_1,x_2$, then different choices of $a$ correspond to different $x_3$. Then the total number of different choices of $x_3$ such that $\rank(M_6)=2$ is equal to $p-2$. Let $x_{i_j}$ respectively denote the $i_j$-th column in $D_h$, where $1 \leq i_j \leq q$. Let $\{x_{i_1}, x_{i_2}, x_{i_3}\}$ be any 3-subset of $\gf_q$ that satisfying $x_{i_3}=ax_{i_1}+(1-a)x_{i_2}, a \in \mathbb{F}_{p} \setminus \{0,1\} $.
Let $(r_{i_1}, r_{i_2}, r_{i_3})$ be a nonzero solution of
\begin{eqnarray*}
\left[
\begin{array}{lll}
1 & 1 & 1  \\
x_{i_1} & x_{i_2} & x_{i_3} \\
x_{i_1}^p & x_{i_2}^p & x_{i_3}^p \\
\vdots   &  \vdots  & \vdots    \\
x_{i_1}^{p^h} & x_{i_2}^{p^h} & x_{i_3}^{p^h}
\end{array}
\right]
\left[
\begin{array}{l}
r_{i_1}\\
r_{i_2}\\
r_{i_3}
\end{array}
\right]
=\mathbf{0}.
\end{eqnarray*}
Since rank of the coefficient matrix equals $2$, then all $r_{i_j} \neq 0$ for $1 \leq j \leq 3$. Let $\bc=(c_1,c_2, \ldots, c_q)$ be a codeword in $\cC_{D_h}^\perp$,
where $c_{i_j}=r_{i_j}$ and $c_v=0$ for all $v \in \{1,2, \ldots, q\} \setminus \{i_1, i_2, i_3\}$. It is clear that wt$(\bc)=3$.
Obviously, $\{k\bc:k \in \gf_q^* \}$ is a set of all codewords of weight 3 in $\cC_{D_h}^\perp$ whose nonzero coordinates is $\{i_1, i_2, i_3\}$.
Therefore, every codeword of weight 3 and its nonzero multiples in $\cC_{D_h}^\perp$ with nonzero coordinates $\{i_1, i_2, i_3 \}$ must correspond to the set $\{x_{i_1}, x_{i_2}, x_{i_3}\}$. For every pair of distinct elements $x_{i_1}, x_{i_2}$ in $\gf_q$,
the number of different choices of $x_{i_3}$ is equal to $p-2$. We then deduce that the codewords of weight 3 in $\cC_{D_h}^\perp$ support a 2-$(q,3,p-2)$ design. By Equation (\ref{eqn-t}), we have
$$ A_3^\perp=\frac{q(q-1)^2(p-2)}{6}.$$
The proof is completed.
\end{proof}

\begin{remark}\label{rem}
By Theorems \ref{thm-01} and \ref{thm-p2}, we have $q-p^h \leq d(\cC_{D_h}) \leq q-p^j$, where $j$ is the least integer such that $p^j \geq h+1$. For $h=1,2,3,4$, we compute the parameters of $\cC_{D_h}$ and $\cC_{D_h}^\perp$ by magma in some cases. We list them in Table \ref{tab1}. These results show that the lower bound of $d(\cC_{D_h})$ is tight in these cases. It is open to determine the exact value of $d(\cC_{D_h})$ and the weight distribution of $\cC_{D_h}$ for general $h$.
\begin{table}[!htp]
\begin{center}
\caption{The parameters of $\cC_{D_h}$ and $\cC_{D_h}^\perp$ in Theorem \ref{thm-01}, \ref{thm-p2}. \label{tab1}}
\begin{tabular}{ccccc} \hline
 $p$ & $h$ & $m$  & $\cC_{D_h}$ &  $\cC_{D_h}^\perp$  \\ \hline
2&1  &2 & $[4,3,2]$& $[4,1,4]$ \\
2&1 &3 & $[8,3,6]$& $[8,5,4]$ \\
3&1 &3 & $[27,3,24]$& $[27,24,3]$ \\
5&1 &3 & $[125,3,120]$& $[125,122,3]$ \\
2&2 &3 & $[8,4,4]$& $[8,4,4]$  \\
2&2 &4  & $[16,4,12]$& $[16,12,4]$  \\
3&2 &3  & $[27,4,18]$& $[27,23,3]$  \\
5&2 &3  & $[125,4,100]$& $[125,121,3]$  \\
2&3 &4  & $[16,5,8]$& $[16,11,4]$  \\
2&3 &5  & $[32,5,24]$& $[32,27,4]$  \\
3&3 &4  & $[81,5,54]$& $[81,76,3]$  \\
2&4 &5  & $[32,6,16]$& $[32,26,4]$  \\
2&4 &6  & $[64,6,48]$& $[64,58,4]$  \\
3&4 &5  & $[243,6,162]$& $[243,237,3]$  \\
\hline
\end{tabular}
\end{center}
\end{table}
\end{remark}

In the following subsections, we determine the parameters  of $\cC_{D_h}$ for some special  $h$.
\subsection{When h =2}
If $p=h=2$, the weight distribution of $\cC_{D_h}$ was studied in \cite{32D}.  In this case, $\cC_{D_h}$ is an NMDS code holding $3$-designs.

\begin{theorem}\label{h=2}
Let $q = 2^m$ with $m >2$ and $h=2$. Then $\cC_{D_2}$ generated by the matrix $G_{D_2}$ is an NMDS code with parameters $[q, 4, q - 4]$ and weight enumerator
\begin{eqnarray*}
A(z)=1 + \frac{q(q-1)^2(q-2)}{24} z^{q-4} + \frac{q(q-1)^2(q+4)}{4} z^{q-2}+ \frac{q(q-1)(q^2+8)}{3} z^{q-1}+\\ \frac{(q-1)(3q^3+3q^2-6q+8)}{8} z^{q}.
\end{eqnarray*}
Moreover, the minimum weight codewords in $\cC_{D_2}$ support a $3$-$(q, q-4,\frac{(q-4)(q-5)(q-6)}{24})$ simple design and the minimum weight codewords in $\cC_{D_2}^{\perp}$ support a $3$-$(q, 4, 1)$ simple design, i.e., a Steiner system $S(3, 4, q)$. Furthermore, the codewords of weight $5$ in $\cC_{D_2}^{\perp}$ support a $3$-$(q, 5,\frac{(q-4)(q-8)}{2})$ simple design.
\end{theorem}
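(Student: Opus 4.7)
The plan is to first pin down the parameters of $\cC_{D_2}$ and its dual via Theorem \ref{thm-01}, then determine the weight enumerator by counting triples $(a_0,a_1,a_2)$ according to the $\gf_2$-kernel of the linearized polynomial $g(x)=a_0x+a_1x^2+a_2x^4$, and finally establish the three $3$-designs by combining the Assmus--Mattson Theorem with an explicit combinatorial count for the weight-$5$ supports in $\cC_{D_2}^\perp$. Theorem \ref{thm-01} already gives $\dim \cC_{D_2}=4$, $d(\cC_{D_2}^\perp)=4$, and that the nonzero weights of $\cC_{D_2}$ lie in $\{q-4,q-2,q-1,q\}$. To confirm $d(\cC_{D_2})=q-4$, I would pick any $\gf_2$-linearly independent $u,v\in\gf_{2^m}$ (possible since $m>2$) and solve $g(u)=g(v)=0$: the $2\times 3$ coefficient matrix has nonzero minor $uv(u+v)$, so a nonzero triple exists, and by Lemma \ref{root-p} the resulting $g$ has kernel of dimension exactly $2$. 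With $d=q-4=n-k$ and $d^\perp=4=k$, both $\cC_{D_2}$ and $\cC_{D_2}^\perp$ are AMDS, so $\cC_{D_2}$ is NMDS.

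For the weight enumerator, let $n_\ell$ denote the number of nonzero triples $(a_0,a_1,a_2)$ with $|\ker g|=\ell\in\{1,2,4\}$. Double-counting pairs $(V,g)$ with $V$ a $2$-dimensional $\gf_2$-subspace of $\gf_{2^m}$ contained in $\ker g$ yields $n_4=(q-1)^2(q-2)/6$: the Gaussian binomial count gives $(q-1)(q-2)/6$ such $V$, each imposes a rank-$2$ $\gf_q$-linear system on $(a_0,a_1,a_2)$ with $q-1$ nonzero solutions, and since Lemma \ref{root-p} forces $\dim_{\gf_2}\ker g\le 2$, every nonzero triple with $|\ker g|=4$ is counted exactly once (via $V=\ker g$) while triples with smaller kernel contribute nothing. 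Double-counting pairs $(u,g)$ with $u\in\ker g\setminus\{0\}$ then gives $(q-1)(q^2-1)=n_2+3n_4$, whence $n_2=(q-1)^2(q+4)/2$, and $n_1=q^3-1-n_2-n_4=(q-1)(q^2+8)/3$. Translating into weight multiplicities, for a triple with $|\ker g|=\ell$ exactly $q/\ell$ values of $c$ lie in $\image g$ (each producing $f=g+c$ with $\ell$ zeros) and the remaining $q-q/\ell$ values give $f$ with no zeros, so $A_{q-4}=n_4q/4$, $A_{q-2}=n_2q/2$, $A_{q-1}=n_1q$, and $A_q=(q-1)+n_2q/2+3n_4q/4$, matching the claimed coefficients after expansion.

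The two minimum-weight $3$-designs follow from the Assmus--Mattson Theorem \ref{the-AM} with $t=3$: since $m>2$ forces $q\ge 8$ we have $t<d$, and the sequence $(A_0,\ldots,A_{q-3})$ contains only the single nonzero weight $q-4$, meeting the bound $d^\perp-t=1$; substituting $A_{q-4}$ and $A_4^\perp$ into \eqref{eqn-t} produces the parameters $3$-$(q,q-4,(q-4)(q-5)(q-6)/24)$ and the Steiner system $S(3,4,q)$. The weight-$5$ design lies outside both Assmus--Mattson (since $w^\perp=4$ once $q\ge 16$) and the affine-invariance route of Theorem \ref{AFF} (which is only $2$-transitive), so I would argue directly: fix any $3$-subset $\{a,b,c\}\subset\gf_q$ and count ordered pairs $(y_4,y_5)\in(\gf_q\setminus\{a,b,c\})^2$ with $y_4\ne y_5$ such that none of the five $4$-subsets of $\{a,b,c,y_4,y_5\}$ has vanishing sum in $\gf_{2^m}$, using the characterization of singular $4\times 4$ parity-check minors from the proof of Theorem \ref{thm-01}. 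Excluding $y_4=a+b+c$ and, for each admissible $y_4$, the four forbidden values $y_5\in\{a+b+c,\,a+b+y_4,\,a+c+y_4,\,b+c+y_4\}$ gives $(q-4)(q-8)$ ordered pairs, hence $(q-4)(q-8)/2$ blocks through $\{a,b,c\}$.

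The main obstacle is this weight-$5$ count: verifying that the four forbidden $y_5$-values are pairwise distinct and all lie outside $\{a,b,c,y_4\}$ is a sequence of small characteristic-$2$ checks and is the only place where the standard design-theoretic machinery (Assmus--Mattson, affine-invariance, and complementary designs) does not apply directly; care is also needed in the borderline $q=8$, where $\lambda=0$ reduces the design to the trivial empty design.
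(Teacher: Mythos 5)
Your proposal is correct, but it is worth noting that the paper itself gives \emph{no} proof of this theorem: it simply cites \cite{32D}, where the $p=h=2$ case was first worked out. So any comparison is really with the paper's treatment of the neighbouring cases (Theorems \ref{h=2'} and \ref{thm-1}), where the authors obtain the weight enumerator by feeding $A_3^\perp$ or $A_4^\perp$ into the first few Pless power moments and solving a linear system. Your route is genuinely different and arguably more illuminating: you count the nonzero linearized polynomials $g(x)=a_0x+a_1x^2+a_2x^4$ by the size of their $\gf_2$-kernel via two double counts (over $2$-dimensional subspaces contained in the kernel, and over nonzero kernel elements), then distribute over $c$ according to whether $c\in\image g$. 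I checked the arithmetic: $n_4=(q-1)^2(q-2)/6$, $n_2=(q-1)^2(q+4)/2$, $n_1=(q-1)(q^2+8)/3$, and the resulting $A_{q-4},A_{q-2},A_{q-1},A_q$ all match the stated enumerator. Your method avoids needing $A_4^\perp$ as an input and explains \emph{why} the frequencies are what they are, at the cost of being specific to $h=2$; the power-moment method generalizes more mechanically. The two minimum-weight designs via Assmus--Mattson are routine and correct ($d^\perp-t=1$ and only the weight $q-4$ survives in $(A_1,\dots,A_{q-3})$). Most valuably, you supply an actual argument for the weight-$5$ design in $\cC_{D_2}^\perp$, which, as you observe, is \emph{not} reachable by Theorem \ref{the-AM} (one computes $w^\perp=4$ for $q>8$) nor by the $2$-transitive affine-invariance route; the paper offers nothing here. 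Your forbidden-value count $(q-4)(q-8)/2$ is correct, including the degenerate $q=8$ case.

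One detail you should spell out to make the weight-$5$ argument airtight: you need that a $5$-set $S$ is the support of a weight-$5$ dual codeword \emph{if and only if} no $4$-subset of $S$ has zero sum. The ``if'' direction needs the observation that any five distinct columns of $D_2$ span a space of rank exactly $4$ (rank $3$ would force every $4$-subset to be dependent, hence two equal elements of $S$), so the kernel of the $4\times 5$ submatrix is one-dimensional; its generator then has full support precisely when no $4$ columns are dependent, i.e.\ when no $4$-subset sums to zero by the $|M_4|$ factorization from the proof of Theorem \ref{thm-01}. This also guarantees each block carries exactly $q-1$ codewords, which is what makes the design simple and consistent with Equation (\ref{eqn-t}). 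With that sentence added, the proof is complete.
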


The weight enumerator of $\cC_{D_h}$ if $p>2, h=2$ is determined in the following theorem.

\begin{theorem}\label{h=2'}
Let $q=p^m$ with $p>2, m>2$ and $h=2$. Then $\cC_{D_h}$ is a $[q, 4, q-p^2]$ code over $\gf_q$ with weight enumerator
\begin{eqnarray*}
A(z)=1 + \frac{q(q-p)(q-1)^2}{p^3(p-1)^2(p+1)} z^{q-p^2} + \frac{q(q-1)^2(p^2q+p^2-q-pq)}{p^2(p-1)^2} z^{q-p}+ \\ \frac{q(q-1)(p^3q^2+p^3q+p^3-2p^2q^2-p^2q-pq^2-2pq+3q^2)}{(p-1)^2(p+1)} z^{q-1}+\\ \frac{(q-1)(p^3+p^2q^3-p^2q+pq^2-pq-q^3+q^2)}{p^3} z^{q}.
\end{eqnarray*}
\end{theorem}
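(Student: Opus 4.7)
My plan is to compute the weight distribution of $\cC_{D_h}$ by analyzing the number of $\gf_q$-zeros of the polynomial $f(x) = c + a_0 x + a_1 x^p + a_2 x^{p^2}$ attached to each codeword $\bc_{c,a_0,a_1,a_2}$. The weight equals $q - N_f$, and by Lemma \ref{lem-0h} the value $N_f$ lies in $\{0,1,p,p^2\}$, so the nonzero weights can only be $q-p^2,\ q-p,\ q-1,\ q$, matching the statement. For each $k \in \{0,1,2\}$ introduce $N_k$, the number of nonzero triples $(a_0,a_1,a_2)$ such that the $\gf_p$-linear map $g(x) = a_0 x + a_1 x^p + a_2 x^{p^2}$ has kernel of size $p^k$ on $\gf_q$. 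Since $|\image(g)| = q/p^k$, for each such $g$ exactly $q/p^k$ choices of $c$ make $-c \in \image(g)$ and yield a codeword of weight $q-p^k$, while the remaining $q - q/p^k$ choices give weight $q$. Together with the $q-1$ codewords of weight $q$ coming from $(a_0,a_1,a_2) = \mathbf{0}$ and $c \neq 0$, this gives
\begin{eqnarray*}
A_{q-p^2} = \frac{q}{p^2}\,N_2, \qquad A_{q-p} = \frac{q}{p}\,N_1, \qquad A_{q-1} = q\,N_0,
\end{eqnarray*}
and $A_q$ is then forced by $1 + A_{q-p^2} + A_{q-p} + A_{q-1} + A_q = q^4$.

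The heart of the argument is to pin down $N_0, N_1, N_2$ via three linear relations: the trivial count $N_0 + N_1 + N_2 = q^3 - 1$ together with the two moments
\begin{eqnarray*}
N_0 + p\,N_1 + p^2 N_2 = \sum_{g \neq 0} N(g), \qquad N_0 + p^2 N_1 + p^4 N_2 = \sum_{g \neq 0} N(g)^2,
\end{eqnarray*}
both computed by swapping the order of summation. The first moment collapses to counting, for each $x \in \gf_q^*$, the triples satisfying a single nonzero $\gf_q$-linear equation, contributing $q^2$ each. For the second moment I would classify ordered pairs $(x,y) \in (\gf_q^*)^2$ by the rank of the $2 \times 3$ matrix with rows $(x,x^p,x^{p^2})$ and $(y,y^p,y^{p^2})$: the rank drops to $1$ exactly when $y/x \in \gf_p^*$, since proportionality forces $(y/x)^{p-1} = 1$ and the $(p-1)$-th roots of unity in $\gf_q^*$ coincide with $\gf_p^*$. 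This splits the sum into $(q-1)(p-1)$ rank-$1$ pairs contributing $q^2$ each and $(q-1)(q-p)$ rank-$2$ pairs contributing $q$ each, yielding the closed-form second moment.

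Solving the resulting Vandermonde-like $3\times 3$ system produces closed forms for $N_0, N_1, N_2$, from which the stated expressions for $A_{q-p^2}, A_{q-p}, A_{q-1}$ drop out immediately; the expression for $A_q$ then follows from the length identity $\sum A_i = q^4$. I expect the main obstacles to be the bookkeeping in the second-moment computation, namely correctly isolating the $\gf_p$-proportional pairs from the generic ones, and the algebraic tidying needed to cast $A_q$ into the exact rational form stated in the theorem; the rest reduces to elementary linear algebra over $\gf_q$ and to the $\gf_p$-linearity of $g$ that was already used to prove Lemma \ref{root-p}.
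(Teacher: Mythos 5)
Your proposal is correct, and it does arrive at the stated enumerator: solving your three relations gives $N_2=\frac{(q-1)^2(q-p)}{p(p-1)^2(p+1)}$ and $N_1=\frac{(q-1)^2(p^2q+p^2-pq-q)}{p(p-1)^2}$, whence $A_{q-p^2}=\frac{q}{p^2}N_2$ and $A_{q-p}=\frac{q}{p}N_1$ match the theorem exactly. However, your route is genuinely different from the paper's. The paper takes the possible weights $\{q,q-1,q-p,q-p^2\}$ from the proof of Theorem \ref{thm-p2}, imports from that same proof the value $A_3^\perp=\frac{q(q-1)^2(p-2)}{6}$ (coming from the $2$-$(q,3,p-2)$ design supported by the dual's minimum-weight codewords), and then solves the four linear equations furnished by the first four Pless power moments for the four unknown frequencies. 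You instead count directly: classify the nonzero linearized parts $g(x)=a_0x+a_1x^p+a_2x^{p^2}$ by kernel dimension, observe that a constant $c$ produces weight $q-p^k$ precisely when $-c\in\image(g)$ (which happens for $q/p^k$ values of $c$), and pin down $N_0,N_1,N_2$ by the zeroth, first and second moments of $N(g)$. The two proofs consume the same combinatorial input in different packaging: your rank classification of pairs $(x,y)$ with $y/x\in\gf_p^*$ is exactly the dependency condition on triples of columns of $D_2$ that the paper encodes as $A_3^\perp$, and your second moment plays the role of the fourth power moment. What your version buys is self-containedness and transparency (each weight class visibly corresponds to a kernel dimension, and no dual-code data or power-moment identities are quoted as a black box); what the paper's version buys is brevity, and it scales to $h=3$ in Theorem \ref{thm-1} where a third moment would be needed. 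Two bookkeeping points for a full write-up: fix once whether the zero polynomial and the terms with $x=0$ or $y=0$ are included in each moment (your ``contributing $q^2$ each'' counts the full solution space of the hyperplane or rank-$1$ system, so the contributions $N(0)=q$ and $N(0)^2=q^2$ of $g=0$, and the pairs involving $x=0$ or $y=0$, must be accounted for separately), and note that $A_q$ obtained from $\sum_iA_i=q^4$ can be cross-checked against $A_q=(q-1)+\sum_kN_k\left(q-q/p^k\right)$.
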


\begin{proof}
By the proof Theorem \ref{thm-p2}, the possible nonzero weights of $\cC_{D_h}$ are $q,q-1,q-p,q-p^2$. Denote by $w_1=q, w_2=q-1, w_3=q-p, w_4=q-p^2.$ Let $A_{w_i}$
represent the frequency of the weight $w_i, 1 \leq i \leq 4$. By the first five Pless Power Moments in \cite{FE}, we have
\begin{eqnarray*}
\left\{
\begin{array}{ll}
\sum_{i=1}^{4}A_{w_i}=q^4-1, \\
\sum_{i=1}^{4}w_iA_{w_i}=q^4(q-1), \\
\sum_{i=1}^{4}w_i^2A_{w_i}=q^3(q^2-q+1)(q-1),\\
\sum_{i=1}^{4}w_i^3A_{w_i}=q[q(q-1)(q^4-2q^3+4q^2-4q+2)-6A_3^\perp],\\
\end{array}
\right.
\end{eqnarray*}
where $A_3^\perp$ is given in the proof of Theorem \ref{thm-p2}.
Solving the above system of linear equations gives
\begin{eqnarray*}
\left\{
\begin{array}{ll}
A_{q-p^2}=\frac{q(q-p)(q-1)^2}{p^3(p-1)^2(p+1)}, \\
A_{q-p}=\frac{q(q-1)^2(p^2q+p^2-q-pq)}{p^2(p-1)^2}, \\
A_{q-1}= \frac{q(q-1)(p^3q^2+p^3q+p^3-2p^2q^2-p^2q-pq^2-2pq+3q^2)}{(p-1)^2(p+1)} ,\\
A_{q}= \frac{(q-1)(p^3+p^2q^3-p^2q+pq^2-pq-q^3+q^2)}{p^3}.
\end{array}
\right.
\end{eqnarray*}
Then the weight enumerator of $\cC_{D_h}$ follows.
\end{proof}
\subsection{When h =3}

\begin{theorem}\label{thm-1}
Let $q=p^m$ with $p=2$, $m>3$ and $h=3$. Then $\cC_{D_h}$ is a $[q, 5, q-8]$ code over $\gf_q$ with weight enumerator
\begin{eqnarray*}
A(z)=1 + \frac{q(q-1)^2(q-2)(q-4)}{1344} z^{q-8} + \frac{q(q-1)^2(q-2)(3q+8)}{96} z^{q-4}+ \\ \frac{q(q-1)^2(7q^2+12q+32)}{24} z^{q-2}+\frac{2q(q-1)(3q^3+7q^2+32)}{21} z^{q-1}+ \\  \frac{(q-1)(25q^4+9q^3+22q^2-56q+64)}{64} z^{q}.
\end{eqnarray*}
\end{theorem}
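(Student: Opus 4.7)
The plan is to follow the strategy already used in the proof of Theorem \ref{h=2'}, but adapted to the present case of $p=2$ with five candidate nonzero weights instead of four.

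First, I would determine the set of possible nonzero weights of $\cC_{D_3}$. By definition,
$$\cC_{D_3}=\{\bc_{c,a_0,a_1,a_2,a_3}:c,a_0,a_1,a_2,a_3\in\gf_q\},\qquad \bc_{c,a_0,a_1,a_2,a_3}=\Bigl(c+\sum_{i=0}^{3}a_i x^{2^i}\Bigr)_{x\in\gf_q},$$
and Lemma \ref{lem-0h}, applied with $p=2$ and $h=3$, shows that the number of zeros in $\gf_q$ of any nonzero polynomial of this form lies in $\{0,1,2,4,8\}$. Hence the only possible nonzero weights are $w_1=q$, $w_2=q-1$, $w_3=q-2$, $w_4=q-4$, $w_5=q-8$. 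To verify that the extreme weight $q-8$ actually occurs, I would argue that since $m>3$, $\gf_q$ contains a $3$-dimensional $\gf_2$-subspace $V$, and the annihilator polynomial $\prod_{v\in V}(x-v)$ is a $\gf_2$-linearized polynomial of degree $8$, necessarily of the form $\sum_{i=0}^3 a_i x^{2^i}$. This produces a codeword of weight $q-8$ and pins down $d(\cC_{D_3})=q-8$.

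Next, I would compute the five frequencies $A_q,A_{q-1},A_{q-2},A_{q-4},A_{q-8}$ using the first five Pless power moments of $\cC_{D_3}$. The $r$-th Pless moment depends on the dual distribution only through $A_0^\perp,A_1^\perp,\ldots,A_r^\perp$. By Theorem \ref{thm-01}, the dual $\cC_{D_3}^\perp$ has minimum distance $4$, so $A_0^\perp=1$ and $A_1^\perp=A_2^\perp=A_3^\perp=0$, while
$$A_4^\perp=\frac{q(q-1)^2(q-2)}{24}.$$
Substituting these values into the power moments for $r=0,1,2,3,4$ yields a $5\times 5$ linear system in the unknowns $A_q,A_{q-1},A_{q-2},A_{q-4},A_{q-8}$ whose coefficient matrix is essentially Vandermonde in the five distinct values $w_1,\ldots,w_5$, and hence nonsingular. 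Solving this system produces the claimed weight enumerator.

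The main obstacle is the bookkeeping in the fifth Pless moment (for $r=4$), where the dual contribution through $A_4^\perp$ is multiplied by a degree-$4$ polynomial in the coordinate index, and the subsequent inversion of the resulting $5\times 5$ system; both steps are mechanical but lengthy. No new conceptual ingredient beyond what is already used in the proof of Theorem \ref{h=2'} is required. Positivity of $A_{q-8}$ for $m>3$ is immediate from the resulting closed form and reconfirms that $d(\cC_{D_3})=q-8$.
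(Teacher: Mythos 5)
Your proposal is correct and follows essentially the same route as the paper: restrict the nonzero weights to $\{q,q-1,q-2,q-4,q-8\}$ via Lemma \ref{lem-0h}, then solve the first five Pless power moments using $A_4^\perp=\frac{q(q-1)^2(q-2)}{24}$ from Theorem \ref{thm-01}. Your additional remark that the annihilator polynomial of a $3$-dimensional $\gf_2$-subspace realizes the weight $q-8$ is a small but welcome explicit justification that the paper leaves to the closed-form positivity of $A_{q-8}$.
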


\begin{proof}
By the proof of Theorem \ref{thm-01}, the possible weight of $\cC_{D_h}$ are $q,q-1,q-2,q-4,q-8$. Denote by $w_1=q, w_2=q-1,w_3=q-2, w_4=q-4, w_5=q-8.$ Let $A_{w_i}$
represent the frequency of the weight $w_i, 1 \leq i \leq 5$. By the first five Pless Power Moments in \cite{FE},  we have
\begin{eqnarray*}
\left\{
\begin{array}{ll}
\sum_{i=1}^{5}A_{w_i}=q^4-1, \\
\sum_{i=1}^{5}w_iA_{w_i}=q^4(q-1), \\
\sum_{i=1}^{5}w_i^2A_{w_i}=q^3(q^2-q+1)(q-1),\\
\sum_{i=1}^{5}w_i^3A_{w_i}=q^2(q-1)^2(q^4-2q^3+4q^2-4q+2),\\
\sum_{i=1}^{5}w_i^4A_{w_i}=q(q-1)(q^6-3q^5+9q^4-17q^3+22q^2-17q+6)+24A_4^\perp,
\end{array}
\right.
\end{eqnarray*}
where $A_4^\perp$ is determined in the proof of Theorem \ref{thm-01}.
Solving the above system of linear equations yields
\begin{eqnarray*}
\left\{
\begin{array}{ll}
A_{q-8}=\frac{q(q-1)^2(q-2)(q-4)}{1344}, \\
A_{q-4}=\frac{q(q-1)^2(q-2)(3q+8)}{96}, \\
A_{q-2}=\frac{q(q-1)^2(7q^2+12q+32)}{24},\\
A_{q-1}= \frac{2q(q-1)(3q^3+7q^2+32)}{21} ,\\
A_{q}= \frac{(q-1)(25q^4+9q^3+22q^2-56q+64)}{64}.
\end{array}
\right.
\end{eqnarray*}
Then the weight enumerator of $\cC_{D_h}$ follows.
\end{proof}

It is open to determine the weight enumerator of $\cC_{D_h}$ when $p>2$ and $h=3$.

\subsection{When $h\mid m$}
\begin{theorem}\label{th-hdivm}
Let $h$ and $m$ be positive integers with $h<m$, $h \mid m$ and $q=p^m$ with $p$ a prime.
Let $\C_{D_h}$ be the linear code over $\gf_q$ generated by $D_h$. Then $\C_{D_h}$ has parameters $[q, h+2, q-p^h]$.
\end{theorem}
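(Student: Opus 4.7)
The plan is to combine the lower bound on the minimum distance supplied by Theorems \ref{thm-01} and \ref{thm-p2} with an explicit codeword of weight $q-p^h$ that exists precisely when $h \mid m$. From the earlier theorems we already know that $\cC_{D_h}$ has dimension $h+2$ and that every nonzero codeword has weight in the set $\{q-p^h, q-p^{h-1}, \ldots, q-p^{j}\}$, so in particular $d(\cC_{D_h}) \geq q - p^h$. It therefore suffices to exhibit a single codeword of weight exactly $q-p^h$.

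The natural choice is the linearized polynomial $f(x) = x^{p^h} - x$, which is obtained by specializing the general codeword expression $\bc_{c, a_0, a_1, \ldots, a_h} = \bigl(c + \sum_{i=0}^{h} a_i x^{p^i}\bigr)_{x \in \gf_q}$ to $c = 0$, $a_0 = -1$, $a_h = 1$ and $a_i = 0$ for $1 \leq i \leq h-1$. Because $h \mid m$, the subfield $\gf_{p^h}$ is contained in $\gf_{p^m} = \gf_q$, and the zero set of $x^{p^h} - x$ in any extension of $\gf_p$ is precisely $\gf_{p^h}$; in particular it has exactly $p^h$ zeros inside $\gf_q$. Hence the corresponding codeword has weight $q - p^h$.

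Combining this with the lower bound $d(\cC_{D_h}) \geq q - p^h$ immediately gives $d(\cC_{D_h}) = q - p^h$, and together with $\dim(\cC_{D_h}) = h+2$ established earlier we obtain the stated parameters $[q, h+2, q - p^h]$.

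I do not expect any real obstacle: once the divisibility hypothesis $h \mid m$ is invoked, the construction of the extremal codeword via $x^{p^h} - x$ is essentially forced, and the only input needed from the rest of the paper is the weight classification provided by Lemma \ref{lem-0h} and summarized in Theorems \ref{thm-01} and \ref{thm-p2}. The subtlety worth flagging explicitly is that without the hypothesis $h \mid m$ one can still write down $x^{p^h} - x$, but its zero set in $\gf_q$ would be $\gf_{p^{\gcd(h,m)}}$ of cardinality $p^{\gcd(h,m)} < p^h$, which is why the divisibility assumption is essential for achieving the lower bound.
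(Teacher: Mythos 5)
Your proposal is correct and follows essentially the same route as the paper: both exhibit the codeword coming from $f(x)=x^{p^h}-x$, use $h\mid m$ to conclude that its zero set in $\gf_q$ is exactly $\gf_{p^h}$ of size $p^h$, and combine this with the dimension and weight bounds from Theorems \ref{thm-01} and \ref{thm-p2}. Your additional remark on why the divisibility hypothesis is needed (the zero set would otherwise be $\gf_{p^{\gcd(h,m)}}$) is a nice clarification not spelled out in the paper.
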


\begin{proof}
Consider the polynomial
$$f(x) = c + \sum_{i=0}^{h}a_{i}x^{p^i}, \ a_{i},c \in \gf_q.$$
Let $a_{h}=1, a_{0}=-1, c=0, a_{i}=0, 1 \leq i \leq h-1$. Then $f(x)=x^{p^h} - x$. Let $f(x)=0$, then we have $x^{p^h}=x$, which implies $x \in \gf_{p^h} \subseteq \gf_q$ as $h \mid m$. Thus, the number of zeros of $f(x)$ in $\gf_q$ is equal to $p^h$. By the proofs of Theorems \ref{thm-01} and \ref{thm-p2},
$\C_{D_h}$ has parameters $[q, h+2, q-p^h]$ for any prime $p$.
\end{proof}

\subsection{When $h=m-1$}
The trace function from $\gf_q$ onto $\gf_p$ is defined by
 $$\tr_{q/p}(x)=x+x^{p}+x^{p^2}+\cdots+x^{p^{m-1}}.$$

\begin{theorem}\label{th-h=m-1}
Let $m$ be a positive integer  and $q=p^m$ with $p$ a prime.
Let $\C_{D_h}$ be the linear code over $\gf_q$ generated by $D_h$ and $h=m-1$. Then $\C_{D_h}$ has parameters $[q, m+1, q-p^{m-1}]$.
\end{theorem}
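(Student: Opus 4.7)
The plan is to identify the codeword realizing the minimum weight by taking advantage of the trace function $\tr_{q/p}$, which fits exactly into the polynomial template defining codewords of $\cC_{D_h}$ when $h = m-1$.

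First, I would recall from the proofs of Theorems \ref{thm-01} and \ref{thm-p2} that $\dim(\cC_{D_h}) = h+2 = m+1$, and that every nonzero codeword has weight in $\{q, q-1, q-p, q-p^2, \dots, q-p^h\} = \{q, q-1, q-p, \dots, q-p^{m-1}\}$. In particular, this already gives the lower bound
\[
d(\cC_{D_h}) \geq q - p^{m-1}.
\]
So only the matching upper bound remains.

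Next, I would exhibit an explicit codeword achieving weight $q - p^{m-1}$. Taking $c = 0$ and $a_0 = a_1 = \cdots = a_{m-1} = 1$ in the polynomial $f(x) = c + \sum_{i=0}^{h} a_i x^{p^i}$ that parametrizes codewords, one gets
\[
f(x) = x + x^p + x^{p^2} + \cdots + x^{p^{m-1}} = \tr_{q/p}(x).
\]
Since $\tr_{q/p}\colon \gf_q \to \gf_p$ is a surjective $\gf_p$-linear map, its kernel is an $\gf_p$-subspace of $\gf_q$ of dimension $m-1$, hence has exactly $p^{m-1}$ elements. Therefore the associated codeword $\bc_{0,1,1,\dots,1}$ vanishes at precisely $p^{m-1}$ coordinates (among the first $q-1$ coordinates corresponding to $\gf_q^*$, together with the last coordinate, which is $\tr_{q/p}(0) = 0$; altogether these account for the $p^{m-1}$ elements of $\ker \tr_{q/p} \subseteq \gf_q$). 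Its weight is therefore $q - p^{m-1}$, which combined with the lower bound gives $d(\cC_{D_h}) = q - p^{m-1}$.

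I do not anticipate any serious obstacle: the key observation is simply that when $h$ is as large as $m-1$, the trace polynomial is among the polynomials defining codewords of $\cC_{D_h}$, and the surjectivity of the trace immediately supplies the extremal zero set. The only mild subtlety to check is a careful bookkeeping of which column of $D_h$ corresponds to which element of $\gf_q$, and in particular that the ``extra'' column (with entries $1, 0, 0, \dots, 0$) also counts as a zero of $f$ since $\tr_{q/p}(0) = 0$; this does not change the count and the argument goes through cleanly.
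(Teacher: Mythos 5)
Your proposal is correct and follows essentially the same route as the paper: both proofs specialize the defining polynomial to $c=0$, $a_0=\cdots=a_{m-1}=1$ so that $f=\tr_{q/p}$, whose kernel of size $p^{m-1}$ yields a codeword of weight $q-p^{m-1}$, and then invoke the dimension and weight-set facts from Theorems \ref{thm-01} and \ref{thm-p2}. Your write-up is in fact slightly more explicit than the paper's (justifying the kernel size via surjectivity and noting the extra zero coordinate), but there is no substantive difference.
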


\begin{proof}
Let $h=m-1$. Consider the polynomial
$$f(x) = c + \sum_{i=0}^{h}a_{i}x^{p^i}, \ a_{i},c \in \gf_q.$$
Let $c=0, a_{i}=1, 0 \leq i \leq h$. Then $f(x)= \sum_{i=0}^{h}x^{p^i}= \tr_{q/p}(x)$. Let $f(x)=0$. Then the number of zeros of $f(x)$ in $\gf_q$ is equal to $p^h$. By the proofs of Theorems \ref{thm-01} and \ref{thm-p2}, the desired conclusion follows. 
\end{proof}

\section{A family of cyclic codes with four weights}\label{sec4}
Let $q$ be a power of an odd prime. Let $U_{q+1}:=\{x \in \gf_{q^2}: x^{q+1} = 1\}$. Let $x_{1}, x_{2}, \cdots, x_{q+1}$ denote all the elements of $U_{q+1}$. Let $G$ be a generator matrix of the linear code $\C$, where
\begin{eqnarray*}
G=\left[
\begin{array}{llll}
1 & 1 & \cdots & 1 \\
x_1 & x_2 & \cdots & x_{q+1} \\
x_1^{p^s} & x_2^{p^s} & \cdots & x_{q+1}^{p^s} \\
x_1^{p^s+1} & x_2^{p^s+1} & \cdots & x_{q+1}^{p^s+1}
\end{array}
\right].
\end{eqnarray*}
In fact, $\C$ is a reducible cyclic code as $U_{q+1}$ is a cyclic group.

\begin{theorem}\label{the-02}
Let $q = p^m$, where $p$ is an odd prime and $m \geq 2$. Let $1 \leq s \leq m-1$ and $l=\gcd(m,s)$. Then $\C$  is a $[q+1, 4, q-p^l]$ cyclic code with weight enumerator
\begin{eqnarray*}
A(z)=1 + \frac{(q+1)q(q-1)^2(p^l-q+p^lq^2+2p^lq^4+q^2+2q^4)}{2(p^l+1)} z^{q+1} + \\ \frac{(q+1)^2(q-1)(p^l-p^lq-q+p^lq^2-p^lq^3+p^lq^4+q^2)}{p^l} z^{q} + \\
\frac{(q+1)^2q(q-1)(p^l-q+p^lq^2-q^2)}{2(p^l-1)} z^{q-1} + \frac{(q+1)^2q(q-1)^2}{p^l(p^{2l}-1)} z^{q-p^l}.
\end{eqnarray*}
Moreover, $\C^\perp$ has parameters $[q+1, q-3, 4]$. The minimum weight codewords of $\C$ support a $3$-$(q+1,q-p^l,\frac{(q-p^l)(q-p^l-1)(q-p^l-2)}{p^l(p^{2l}-1)})$ simple design and
the minimum weight codewords of $\C^\perp$ support a $3$-$(q+1,4,p^l-2)$ simple design. When $p=3, l=1$, the minimum weight codewords of $\C^\perp$ support a $3$-$(3^m+1,4,1)$ simple design, i.e. a Steiner system $S(3,4,3^m+1)$.
\end{theorem}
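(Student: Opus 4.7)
The plan is to identify codewords of $\C$ with polynomials in the $\gf_{q^2}$-space $V_4 := \mathrm{span}\{1, x, x^{p^s}, x^{p^s+1}\}$: write $f(x) = u + ax + bx^{p^s} + cx^{p^s+1}$ and $\bc_{u,a,b,c} = (f(x_1), \ldots, f(x_{q+1}))$, so that $\wt(\bc_{u,a,b,c}) = q+1 - N_f$ with $N_f$ the number of zeros of $f$ in $U_{q+1}$. Lemma~\ref{root} restricts $N_f \in \{0, 1, 2, p^l+1\}$ for nonzero $f$, so the only possible nonzero weights of $\C$ are $q+1$, $q$, $q-1$, $q-p^l$; the parameterization is injective because $N_f = q+1$ would violate Lemma~\ref{root}, so $\dim \C = 4$.

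The main technical step is to prove $d(\C^\perp) \geq 4$, i.e., that any three columns of $G$ are linearly independent. Equivalently, for any three distinct $x_i, x_j, x_k \in U_{q+1}$, the evaluation map $V_4 \to \gf_{q^2}^3$ is surjective. Suppose instead its kernel has dimension at least $2$, and pick linearly independent $f_1, f_2$ in it, each with exactly $p^l+1$ zeros in $U_{q+1}$ by Lemma~\ref{root}. The pencil $\{f_1 - \lambda f_2 : \lambda \in \gf_{q^2}\}$ consists of $q^2$ nonzero polynomials, each vanishing on $\{x_i, x_j, x_k\}$ and therefore each having $p^l + 1$ zeros in $U_{q+1}$. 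Double counting the pairs $(\lambda, z) \in \gf_{q^2} \times (U_{q+1} \setminus \{x_i, x_j, x_k\})$ with $(f_1 - \lambda f_2)(z) = 0$ yields $q^2(p^l - 2)$ from the $\lambda$-side and $\tau q^2 + (q - p^l)$ from the $z$-side, where $\tau$ is the number of common zeros of $f_1, f_2$ in $U_{q+1} \setminus \{x_i, x_j, x_k\}$ (such points contribute $q^2$; points of $Z(f_2) \setminus \{x_i, x_j, x_k\}$ with $f_1(z) \ne 0$ contribute $0$; the remaining $q - p^l$ points outside $Z(f_2)$ contribute $1$ each). Solving gives $\tau = (p^l - 2) - (q - p^l)/q^2$, which cannot be an integer since $0 < q - p^l < q^2$, contradicting $\tau \in \mathbb{Z}$.

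With $A_1^\perp = A_2^\perp = A_3^\perp = 0$ in hand, the first four Pless power moments over $\gf_{q^2}$ form a nonsingular $4 \times 4$ linear system in the unknowns $A_{q+1}, A_q, A_{q-1}, A_{q-p^l}$, whose solution yields the claimed weight enumerator. Positivity of $A_{q-p^l}$ then gives $d(\C) = q - p^l$; and taking any four zeros of a minimum-weight polynomial produces four linearly dependent columns of $G$, so $d(\C^\perp) = 4$ and $\C^\perp$ has parameters $[q+1, q-3, 4]$.

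For the $3$-designs I invoke Theorem~\ref{the-AM} with $t = 3$: the only weight of $\C$ in the range $(A_0, \ldots, A_{q-2})$ is $q - p^l$, matching the permitted $d^\perp - t = 1$, and the defining inequality forces $w = q - p^l$ and $w^\perp = 4$. Hence the minimum-weight codewords of $\C$ and of $\C^\perp$ each support a simple $3$-design. The value of $\lambda$ for the $\C$-design follows directly from \eqref{eqn-t} and the explicit $A_{q-p^l}$. For the $\C^\perp$-design the same kernel analysis delivers $A_4^\perp$: for each $3$-subset $T \subset U_{q+1}$ the one-dimensional kernel of the evaluation $V_4 \to \gf_{q^2}^3$ at $T$ is spanned by a polynomial with exactly $p^l+1$ zeros, so $T$ extends to precisely $p^l - 2$ four-subsets with linearly dependent columns, each supporting $q^2 - 1$ weight-$4$ codewords; this yields $A_4^\perp = (q+1)^2 q (q-1)^2 (p^l - 2)/24$ and $\lambda = p^l - 2$. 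The Steiner system $S(3, 4, 3^m + 1)$ is the case $p = 3$, $l = 1$, where $\lambda = 1$. The pencil-counting argument is the one genuinely nontrivial step; everything else is routine bookkeeping via Pless and Assmus--Mattson.
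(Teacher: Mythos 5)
Your proposal is correct, and while the overall skeleton (dimension via Lemma~\ref{root}, first four Pless power moments, Assmus--Mattson with $t=3$) coincides with the paper's, you replace the two genuinely computational steps by different arguments. For $d(\C^\perp)\ge 4$, the paper fixes three distinct $x,y,z\in U_{q+1}$ and exhibits a nonsingular $3\times 3$ minor of the corresponding $4\times 3$ submatrix, splitting into the cases $\frac{z-x}{y-x}\notin\gf_{p^s}^*$ (rows $1,x,x^{p^s}$) and $\frac{z-x}{y-x}\in\gf_{p^s}^*$ (rows $1,x^{p^s},x^{p^s+1}$), each requiring an explicit determinant evaluation. Your pencil argument --- assuming the evaluation kernel at three points has dimension $\ge 2$, noting every nonzero member of the pencil $f_1-\lambda f_2$ then has exactly $p^l+1$ zeros, and double-counting incidences to reach the non-integer $\tau=(p^l-2)-(q-p^l)/q^2$ --- is a clean, determinant-free alternative that uses only Lemma~\ref{root} and the inequality $0<q-p^l<q^2$; it is valid as written. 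Second, for $A_4^\perp$ the paper extracts it from the fifth Pless power moment, whereas you count it directly: each $3$-subset of $U_{q+1}$ has a one-dimensional evaluation kernel spanned by a polynomial with exactly $p^l+1$ zeros, hence extends to exactly $p^l-2$ dependent $4$-sets, giving $A_4^\perp=\frac{(q+1)^2q(q-1)^2(p^l-2)}{24}$ and, as a bonus, an independent verification that the weight-$4$ supports form a $3$-$(q+1,4,p^l-2)$ design without invoking Assmus--Mattson for the dual. Both routes buy something: the paper's is self-contained linear algebra, yours is shorter and exposes the combinatorial structure behind $\lambda=p^l-2$. The remaining steps (injectivity of the parameterization, $d(\C)=q-p^l$ from positivity of $A_{q-p^l}$, $d(\C^\perp)=4$ from four zeros of a minimum-weight polynomial, and the values $w=q-p^l$, $w^\perp=4$ in Assmus--Mattson over the alphabet $\gf_{q^2}$) all check out.
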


\begin{proof}
We first prove that $\dim(\C)=4$. Let $\bg_1$, $\bg_2$, $\bg_3$, $\bg_4$ represent the first, second, third, forth row of $G$, respectively. Assume that $a\bg_1+b\bg_2+c\bg_3+u\bg_4=0$, then we have
\begin{eqnarray*}
\left\{
\begin{array}{c}
a+bx_1+cx_1^{p^s}+ux_1^{p^s+1}=0, \\
\vdots \\
a+bx_{q+1}+cx_{q+1}^{p^s}+ux_{q+1}^{p^s+1}=0. \\
\end{array}
\right.
\end{eqnarray*}
If $f(x)=a + bx + cx^{p^s} + ux^{p^s+1}$ is a nonzero polynomial, then it has at most  $p^l+1\leq p^{m-1}+1$ solutions in $U_{q+1}$. By the above System of equations,  we have $a=b=c=d=0$ and $\dim(\C)=4$.

We then prove that $\C^\perp$ has parameters $[q+1, q-3, 4]$. It is obviously that $\dim(\C^\perp)=q+1-4=q-3$. We prove that $d(\C^\perp)=4$ in the following.
Obviously, any two columns of $G$ are $\gf_{q^2}$-linearly independent. By the Singleton bound, we then have $3 \leq d(\C^\perp) \leq 5$.
Let $x,y,z$ be three pairwise different elements in $U_{q+1}$. We consider the following submatrix given by
\begin{eqnarray*}
D=\left[
\begin{array}{lll}
1 & 1 & 1 \\
x & y & z \\
x^{p^s} & y^{p^s} & z^{p^s} \\
x^{p^s+1} & y^{p^s+1} & z^{p^s+1}
\end{array}
\right].
\end{eqnarray*}
If $\frac{z-x}{y-x} \notin \gf_{p^s}^*$, we consider the submatrix $D_1$ of $D$, where
\begin{eqnarray*}
D_1=\left[
\begin{array}{lll}
1 & 1 & 1 \\
x & y & z \\
x^{p^s} & y^{p^s} & z^{p^s}
\end{array}
\right].
\end{eqnarray*}
Note that $|D_1|=(z-x)(y-x)^{p^s}-(y-x)(z-x)^{p^s}=0$ if and only if $(\frac{z-x}{y-x})^{p^s-1}=1$. Hence $|D_1|\neq0$ if $\frac{z-x}{y-x} \notin \gf_{p^s}^*$.
If $\frac{z-x}{y-x} \in \gf_{p^s}^*$, we consider the submatrix $D_2$ of $D$, where
\begin{eqnarray*}
D_2=\left[
\begin{array}{lll}
1 & 1 & 1 \\
x^{p^s} & y^{p^s} & z^{p^s} \\
x^{p^s+1} & y^{p^s+1} & z^{p^s+1}
\end{array}
\right].
\end{eqnarray*}
Suppose that $|D_2|=(y-x)(z-x)^{p^s}y^{p^s}-(z-x)(y-x)^{p^s}z^{p^s}=0$. Then $(\frac{z-x}{y-x})^{p^s-1}=(\frac{z}{y})^{p^s}$. Since $\frac{z-x}{y-x} \in \gf_{p^s}^*$, we have
$(\frac{z}{y})^{p^s}=1$. Then $\frac{z}{y}=1$ as $\gcd(p^s,q^2-1)=1$. This contradicts with $y\neq z$. Hence, $|D_2| \neq 0$.
We then deduce $\rank(D)=3$ and $4 \leq d(\C^\perp) \leq 5$. Now we prove that $\C$ has four possible nonzero weights $w_1=q+1, w_2=q, w_3=q-1, w_4=q-p^l$.
By definition, $$\C=\{c_{a,b,c,u}=(a+bx+cx^{p^s}+ux^{p^s+1})_{x \in U_{q+1}}:a,b,c,u \in \gf_{q^2}\}.$$  Note that the Hamming weight $\wt(c_{a,b,c,u})$ of $c_{a,b,c,u}$ satisfies $$\wt(c_{a,b,c,u}) \in \{q+1,q,q-1,q-p^l\}$$ by Lemma \ref{root}. If $d(\C^\perp)=5$, then $\C^\perp$ is a MDS code with parameters $[q+1,q-3,5]$ and $d(\C)=q-2$, which contradicts with $\wt(c_{a,b,c,u}) \in \{q+1,q,q-1,q-p^l\}$. Therefore, $d(\C^\perp)=4$ and $\C^\perp$ is an AMDS code with parameters $[q+1,q-3,4]$.

Finally, we calculate the weight enumerator of $\C$. Let $w_1=q+1, w_2=q,w_3=q-1, w_4=q-p^l.$ Let $A_{w_i}$
represent the frequency of the weight $w_i, 1 \leq i \leq 4$. Then by the first four Pless power moments in \cite{FE}, we have
\begin{eqnarray*}
\left\{
\begin{array}{l}
\sum_{i=1}^{4}A_{w_i} = (q^2)^4-1, \\
\sum_{i=1}^{4}w_iA_{w_i}= (q^2)^8(q^2n-n), \\
\sum_{i=1}^{4}w_i^2A_{w_i}= (q^2)^2[(q^2-1)n(q^2n-n+1)], \\
\sum_{i=1}^{4}w_i^3A_{w_i}= q^2[(q^2-1)n(q^4n^2-2q^2n^2+3q^2n-q^2+n^2-3n)]. \\
\end{array}
\right.
\end{eqnarray*}
Solving this system of linear equations yields
\begin{eqnarray*}
A_{w_1} &=& \frac{(q+1)q(q-1)^2(p^l-q+p^lq^2+2p^lq^4+q^2+2q^4)}{2(p^l+1)},\\
A_{w_2}&=&\frac{(q+1)^2(q-1)(p^l-p^lq-q+p^lq^2-p^lq^3+p^lq^4+q^2)}{p^l},\\
A_{w_3}&=&\frac{(q+1)^2q(q-1)(p^l-q+p^lq^2-q^2)}{2(p^l-1)},\\
 A_{w_4}&=&\frac{(q+1)^2q(q-1)^2}{p^l(p^{2l}-1)}.
\end{eqnarray*}
Then $\C$ has parameters $[q+1, 4, q-p^l]$ and the weight enumerator of $\C$ follows. By the Pless power moments in \cite{FE}, we have $A_4^\perp = \frac{(q+1)^2q(q-1)^2(p^l-2)}{24}$.
It follows from Theorem \ref{the-AM} and Equation (\ref{eqn-t}) that
the minimum weight codewords of $\C$ support a 3-$(q+1,q-p^l,\frac{(q-p^l)(q-p^l-1)(q-p^l-2)}{p^l(p^{2l}-1)})$ simple design and the minimum weight codewords of $\C^\perp$
support a 3-$(q+1,4,p^l-2)$ simple design.

The proof is completed.
\end{proof}

\begin{example}
Let $p=3, m=2, s=1$. Then the linear code $\C$  is an NMDS code  with parameters $[10,4,6]$ and weight enumerator
 $$A(z)=1+2400z^6+280800z^8+4743200z^9+38020320z^{10}.$$
The dual code $\C^{\perp}$ has parameters $[10,6,4]$. Besides, the codewords of weight $6$ in $\C$ support a $3$-$(10,6,5)$ simple design and the codewords of weight $4$ in $\C^\perp$ support a $3$-$(10,4,1)$ simple design, i.e. a Steiner system $S(3,4,10)$.
\end{example}

\begin{example}
Let $p=5, m=2, s=1$. Then the linear code $\C$ has parameters $[26,4,20]$ and weight enumerator
 $$A(z)=1+81120z^{20}+125736000z^{24}+6095697504z^{25}+146366376000z^{26}.$$
The dual code $\C^{\perp}$ has parameters $[26,22,4]$. Besides, the minimum weight codewords of $\C$ support a $3$-$(26,20,57)$ simple design and the minimum weight codewords of $\C^\perp$ support a $3$-$(26,4,3)$ simple design.
\end{example}

\section{Optimal locally recoverable codes}\label{sec5}
 Let $\mathcal{C}$ be a linear code with parameters $[n,k,d]$ over $\mathbb{F}_q$. For each positive integer $n$, let $[n]:=\{ 0,1,\cdots,n-1 \}$. Then we use the the elements in $[n]$ to index the coordinates of the codewords in $\mathcal{C}$.
For each $i \in [n]$, if there exist a subset $R_{i} \subseteq [n] \backslash {i}$ of size $r$ and a function $f_{i}(x_1,x_2,\cdots,x_r)$ on $\mathbb{F}_q^{r}$ meeting $c_i=f_{i}(\mathbf{c}_{R_i})$ for any $\mathbf{c}=(c_0,\cdots,c_{n-1}) \in \mathcal{C}$, then $\mathcal{C}$ is referred to as an $(n,k,d,q;r)$-LRC, where $\mathbf{c}_{R_i}$ is the projection of $\mathbf{c}$ at $R_{i}$. The set $R_{i}$ is known as  the repair set of $c_i$ and $r$ is called the locality of $\cC$. If each $f_i$ is a homogeneous function with degree $1$, then $\cC$ is called an $(n, k, d, q;r)$-LLRC (linearly local recoverable code) and has linear locality $r$. Obviously, each nontrivial linear code $\cC$ has a minimum linear locality.
The following lemma presents the relation between the minimum locality and the minimum linear locality of a nontrivial linear code.
\begin{lemma}[\cite{ZLRC}]
The minimum locality and minimum linear locality of a nontrivial linear code are equal.
\end{lemma}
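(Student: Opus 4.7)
The plan is to prove both inequalities between these two quantities. The direction that the minimum locality is at most the minimum linear locality is immediate because a linear repair function is a special case of an arbitrary function, so the real content lies in the reverse inequality. First I would fix any coordinate $i \in [n]$ that has locality $r$, witnessed by a repair set $R_i \subseteq [n] \setminus \{i\}$ of size $r$ together with some function $f_i : \mathbb{F}_q^r \to \mathbb{F}_q$ satisfying $c_i = f_i(\mathbf{c}_{R_i})$ for every $\mathbf{c} \in \mathcal{C}$. The aim is to replace $f_i$ by an $\mathbb{F}_q$-linear functional without enlarging $R_i$; doing this for every coordinate then yields the desired equality.

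Next I would pass to the $\mathbb{F}_q$-linear map $\phi : \mathcal{C} \to \mathbb{F}_q^{r+1}$ sending $\mathbf{c} \mapsto (\mathbf{c}_{R_i}, c_i)$, and set $W := \phi(\mathcal{C})$, which is an $\mathbb{F}_q$-subspace of $\mathbb{F}_q^{r+1}$. Let $V \subseteq \mathbb{F}_q^r$ denote the projection of $W$ onto the first $r$ coordinates, also a subspace. The existence of $f_i$ says precisely that for each $v \in V$ there is a unique scalar $g(v) \in \mathbb{F}_q$ with $(v, g(v)) \in W$; in other words, $W$ is the graph of a well-defined map $g : V \to \mathbb{F}_q$. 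The key observation is that such a graph being an $\mathbb{F}_q$-subspace forces $g$ to be linear: given $v_1, v_2 \in V$ and $\alpha \in \mathbb{F}_q$, both $(v_1 + v_2,\, g(v_1) + g(v_2))$ and $(v_1 + v_2,\, g(v_1 + v_2))$ lie in $W$, and the graph property forces their last coordinates to agree, so $g(v_1 + v_2) = g(v_1) + g(v_2)$, and similarly $g(\alpha v_1) = \alpha g(v_1)$.

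Having shown that $g$ is $\mathbb{F}_q$-linear on the subspace $V$, I would extend it to a linear functional $\tilde g : \mathbb{F}_q^r \to \mathbb{F}_q$ by choosing any complementary subspace of $V$ in $\mathbb{F}_q^r$ and setting $\tilde g$ to be zero on it. By construction $\tilde g(\mathbf{c}_{R_i}) = g(\mathbf{c}_{R_i}) = c_i$ for every $\mathbf{c} \in \mathcal{C}$, so $\tilde g$ is a homogeneous degree-one repair rule for $c_i$ supported on $R_i$. This proves that the linear locality of coordinate $i$ is at most its ordinary locality; ranging over all coordinates yields the reverse inequality between the two global quantities, and the equality follows.

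The main obstacle, in my view, is conceptual rather than computational: one has to recognize the structural fact that a linear subspace which also happens to be the graph of a function is necessarily the graph of a linear function, thereby promoting the weak hypothesis \emph{some $f_i$ exists} into the strong conclusion \emph{some linear functional exists}. Once this observation is in hand, no estimate on $r$, no dimension count, and no feature specific to the code $\mathcal{C}$ are needed; the entire argument goes through for an arbitrary nontrivial linear code using only additive closure, which also explains why the cited lemma is true in full generality.
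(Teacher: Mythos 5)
Your argument is correct and complete. Note that the paper does not prove this lemma at all --- it is quoted from the reference \cite{ZLRC} without proof --- so there is no in-paper argument to compare against; what you have written is essentially the standard proof from that reference. The two key points are both handled properly: the image $W$ of the projection $\mathbf{c}\mapsto(\mathbf{c}_{R_i},c_i)$ is a subspace that is simultaneously the graph of a function (well-definedness following from the zero codeword, since a nonzero element of $W$ of the form $(0,w)$ would contradict the existence of any repair function $f_i$), and a subspace that is a graph is necessarily the graph of a linear map, which extends to a homogeneous degree-one functional on all of $\gf_q^r$ without enlarging the repair set. Ranging over coordinates gives minimum linear locality $\le$ minimum locality, and the reverse inequality is immediate, so the equality holds.
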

Besides, there exist some tradeoffs among the parameters of LRCs. For each $(n,k,d,q;r)$-LRC, the Singleton-like bound (see \cite{CM}) is given as
\begin{eqnarray}\label{eqn-slbound}
d \leq n-k- \left \lceil \frac{k}{r} \right \rceil +2.
\end{eqnarray}
LRCs achieving this bound are said to be distance-optimal.
For any $(n,k,d,q;r)$-LRC, the Cadambe-Mazumdar bound (see \cite{GH}) is given by
\begin{eqnarray}\label{eqn-cmbound}
k \leq \mathop{\min}_{t \in \Bbb Z^{+}} [rt+k_{opt}^{(q)}(n-t(r+1),d)],
\end{eqnarray}
where $k_{opt}^{(q)}(n,d)$ represents the largest possible dimension of a linear code of length $n$ and minimum distance $d$ over $\gf_q$, and $\Bbb Z^{+}$ represents the set of all positive integers.
LRCs achieving the this bound are referred to as dimension-optimal ones.

The minimum locality of a nontrivial linear code $\cC$ is given as follows.

\begin{lemma}\cite{ZLRC}\label{locality}
Let $\cC$ be a nontrivial linear code of length $n$ and  $d^{\perp}=d(\cC^{\perp})$. The minimum locality of $\cC$ is $d^{\perp}-1$ if $(\mathcal{P}(\cC^{\perp}), \mathcal{B}_{d^{\perp}}(\cC^{\perp}))$ is a $1$-$(n,d^{\perp},\lambda_1^{\perp})$ design with $\lambda_1^{\perp} \geq 1$.
\end{lemma}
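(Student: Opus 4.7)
The plan is to prove the lemma by bounding the minimum linear locality from both sides and then invoking the preceding lemma that minimum locality equals minimum linear locality for nontrivial linear codes. The connection throughout is the standard correspondence between linear repair relations at coordinate $i$ in $\cC$ and codewords of $\cC^{\perp}$ whose support contains $i$.

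First I would make that correspondence explicit. A linear repair function $c_i = \sum_{j \in R_i} a_j c_j$ valid for every $\bc \in \cC$ is equivalent to the vector $\bu \in \gf_q^n$ having a $-1$ (or a nonzero entry) in position $i$, the values $a_j$ in positions $j \in R_i$, and zeros elsewhere, lying in $\cC^{\perp}$. In particular $\bu \in \cC^{\perp}$, $i \in \support(\bu)$, and $|\support(\bu)| \le |R_i| + 1$. Conversely, any $\bu \in \cC^{\perp}$ with $i \in \support(\bu)$ yields a linear repair relation of size $|\support(\bu)| - 1$ at position $i$ by solving for $c_i$.

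For the upper bound, I would use the hypothesis that $(\cP(\cC^{\perp}),\cB_{d^{\perp}}(\cC^{\perp}))$ is a $1$-$(n,d^{\perp},\lambda_1^{\perp})$ design with $\lambda_1^{\perp} \ge 1$. By the definition of a $1$-design, every coordinate position $i \in [n]$ is contained in at least $\lambda_1^{\perp} \ge 1$ supports of minimum-weight codewords of $\cC^{\perp}$. Picking such a codeword and applying the correspondence above yields a linear repair set at $i$ of size $d^{\perp} - 1$. Since this holds for every $i$, the minimum linear locality is at most $d^{\perp} - 1$.

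For the matching lower bound, I would argue that any repair set $R_i$ coming from a linear relation produces a dual codeword with support inside $R_i \cup \{i\}$ and containing $i$. By definition of the minimum distance, this support has size at least $d^{\perp}$, so $|R_i| \ge d^{\perp} - 1$. Hence the minimum linear locality is exactly $d^{\perp} - 1$. Finally, invoking the lemma that minimum locality equals minimum linear locality for nontrivial linear codes transfers the conclusion from linear to general locality. The only real subtlety, and the step I would be most careful about, is confirming that the $1$-design hypothesis is actually needed (rather than just $d^{\perp}$ itself): without it, there could be a coordinate not covered by any weight-$d^{\perp}$ support, forcing strictly larger repair sets at that coordinate and making the minimum locality exceed $d^{\perp} - 1$. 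The design condition is precisely what guarantees full coverage and hence the clean equality.
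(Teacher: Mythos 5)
The paper offers no proof of this lemma at all: it is quoted verbatim from \cite{ZLRC}, so there is nothing internal to compare against. Your argument is correct and is essentially the standard proof of the cited result. The dictionary between homogeneous linear repair relations at coordinate $i$ and nonzero dual codewords whose support contains $i$ gives $|R_i|+1 \ge d^{\perp}$ at every coordinate (lower bound); the $1$-design hypothesis with $\lambda_1^{\perp}\ge 1$ guarantees a weight-$d^{\perp}$ dual codeword through every coordinate, hence a linear repair set of size $d^{\perp}-1$ everywhere (matching upper bound); and the preceding lemma from \cite{ZLRC} transfers the conclusion from minimum linear locality to minimum locality. Your closing observation is also the right one: $d^{\perp}$ alone only yields the lower bound, and the design condition is exactly what rules out a coordinate missed by all minimum-weight dual supports, which would force a strictly larger repair set there.
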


\begin{theorem}
Let $\cC_{D_h}$ be the code in Theorem \ref{thm-01} with $p=2$. Then $\cC_{D_h}$ is a $$\left(q, h+2, d, q; 3 \right)\mbox{-LRC}$$ and  $\cC_{D_h}^\perp$ is a $$\left(q, q-h-2,4,q;d-1\right)\mbox{-LRC,}$$
where $d \in \{ q-2^h, q-2^{h-1}, \cdots, q-2^j\}$ and $j$ is the least integer such that $2^j \geq h+1$.
\end{theorem}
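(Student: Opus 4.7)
The plan is to derive both locality statements directly from Lemma \ref{locality}, since all the structural information we need about minimum-weight supports of $\cC_{D_h}$ and $\cC_{D_h}^\perp$ has already been established in Theorem \ref{thm-01}. Lemma \ref{locality} tells us that if the supports of minimum-weight codewords of the dual of a code $\cC$ form a $1$-design with positive $\lambda_1$, then the minimum locality of $\cC$ equals $d(\cC^\perp)-1$. So the whole proof reduces to verifying this $1$-design hypothesis in both directions.

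First I would handle $\cC_{D_h}$ itself. By Theorem \ref{thm-01}, $\cC_{D_h}^\perp$ has minimum distance $d^\perp=4$, and the supports of its weight-$4$ codewords form a Steiner system $S(3,4,q)$. A $3$-design is automatically a $1$-design with $\lambda_1=\binom{q-1}{2}/\binom{3}{2}>0$, so the hypothesis of Lemma \ref{locality} is met; therefore the minimum locality of $\cC_{D_h}$ equals $d^\perp-1=3$. Combined with the parameters $[q,h+2,d]$ from Theorem \ref{thm-01}, this yields the claimed $(q,h+2,d,q;3)$-LRC.

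Next I would dualize. For $\cC_{D_h}^\perp$ the role of $d^\perp$ in Lemma \ref{locality} is played by $d(\cC_{D_h})=d$. Here we do not have an explicit Steiner-like description of the minimum-weight codewords of $\cC_{D_h}$, but we do not need one: Theorem \ref{thm-01} asserts that $\cC_{D_h}$ is affine-invariant, and Theorem \ref{AFF} then guarantees that the supports of codewords of any fixed nonzero weight (in particular, weight $d$, which is nonzero because $d<q$ implies $A_d>0$) form a $2$-design. Any $2$-design is a $1$-design with $\lambda_1>0$, so Lemma \ref{locality} applies once more and gives minimum locality $d-1$. Together with $\cC_{D_h}^\perp$ having parameters $[q,q-h-2,4]$ from Theorem \ref{thm-01}, this produces the advertised $(q,q-h-2,4,q;d-1)$-LRC.

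The only mild subtlety, and the one I would highlight in the write-up, is checking that $A_d\neq 0$ so that we actually have a nonempty block set for the $1$-design in the second half; this follows from $d\in\{q-2^h,\ldots,q-2^j\}$ being by definition an attained weight of $\cC_{D_h}$. There is no real calculation here — the work was done in Theorem \ref{thm-01} and Theorem \ref{AFF}, and this proof merely assembles those facts through Lemma \ref{locality}.
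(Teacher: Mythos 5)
Your proposal is correct and follows exactly the route the paper intends: the paper's own proof is the one-line statement that the result follows from Theorem \ref{thm-01} and Lemma \ref{locality}, and your write-up simply makes explicit the two applications of that lemma (the Steiner system $S(3,4,q)$ on the dual side, and the affine-invariance/$2$-design property on the primal side) that justify the localities $3$ and $d-1$. Nothing is missing; you have just spelled out what the authors left implicit.
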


\begin{proof}
The desired conclusion follows from  Theorem \ref{thm-01} and Lemma \ref{locality}.
\end{proof}

\begin{theorem}
Let $\cC_{D_h}$ be the code in Theorem \ref{thm-p2} with $p>2$. Then $\cC_{D_h}$ is a $$\left(q, h+2, d, q; 2 \right)\mbox{-LRC}$$ and  $\cC_{D_h}^\perp$ is a $$\left(q, q-h-2,3,q;d-1\right)\mbox{-LRC,}$$
where $d \in \{ q-p^h, q-p^{h-1}, \cdots, q-p^j\}$, $j$ is the least integer such that $p^j \geq h+1$.
\end{theorem}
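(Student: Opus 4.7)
The plan is to apply Lemma \ref{locality} twice, once to $\cC_{D_h}$ and once to $\cC_{D_h}^\perp$, exactly as in the preceding $p=2$ locality theorem, feeding in the structural information already established in Theorem \ref{thm-p2}. Nothing new has to be computed; the whole argument is a packaging of previously proved design properties into the language of LRCs.

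First, for the locality of $\cC_{D_h}$: Theorem \ref{thm-p2} shows that $d^\perp := d(\cC_{D_h}^\perp) = 3$ and that the minimum weight codewords of $\cC_{D_h}^\perp$ support a $2$-$(q,3,p-2)$ simple design. Since $p$ is an odd prime, $p-2 \geq 1$, so this really is a $2$-design; in particular it is automatically a $1$-$(q, 3, \lambda_1^\perp)$ design with $\lambda_1^\perp \geq 1$ (each coordinate lies in at least one minimum-weight support). Lemma \ref{locality} then gives the minimum locality of $\cC_{D_h}$ as $d^\perp - 1 = 2$, which combined with the parameters $[q, h+2, d]$ from Theorem \ref{thm-p2} yields the claimed $(q, h+2, d, q; 2)$-LRC.

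Second, for the locality of $\cC_{D_h}^\perp$: Theorem \ref{thm-p2} also establishes that $\cC_{D_h}$ is affine-invariant, and Theorem \ref{AFF} then implies that the supports of all codewords of any fixed nonzero weight of $\cC_{D_h}$ form a $2$-design. In particular the minimum weight codewords of $\cC_{D_h}$ form a $1$-$(q, d, \lambda_1)$ design with $\lambda_1 \geq 1$. Applying Lemma \ref{locality} to $\cC_{D_h}^\perp$ (whose dual is $\cC_{D_h}$ with minimum distance $d$) gives minimum locality $d - 1$, producing the claimed $(q, q-h-2, 3, q; d-1)$-LRC with $d$ in the stated range $\{q-p^h, q-p^{h-1}, \ldots, q-p^j\}$ from Theorem \ref{thm-p2}.

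There is essentially no obstacle: the two claims are immediate corollaries of Theorem \ref{thm-p2} (which supplies both the parameters and the design properties required as hypotheses of Lemma \ref{locality}) once one observes the trivial inequality $p - 2 \geq 1$. The proof runs parallel to the $p = 2$ locality theorem stated just above, the only numerical difference being that here the dual minimum distance is $3$ rather than $4$, so the resulting locality of $\cC_{D_h}$ is $2$ rather than $3$.
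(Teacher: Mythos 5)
Your proof is correct and follows exactly the paper's route: the paper's own proof is the one-line statement that the result follows from Theorem \ref{thm-p2} and Lemma \ref{locality}, and your write-up simply fills in the details of why the hypotheses of Lemma \ref{locality} hold (the $2$-$(q,3,p-2)$ design with $p-2\geq 1$ on the dual side, and the affine-invariance-induced $2$-design on the primal side). No discrepancies to report.
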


\begin{proof}
The desired conclusion follows from  Theorem \ref{thm-p2} and Lemma \ref{locality}.
\end{proof}

\begin{theorem}\label{th-LLRC-0}
Let $\cC_{D_2}$ be the code in Theorem \ref{h=2} with $p=2,h=2$. Then $\cC_{D_2}$ is a $$\left(q, 4, q-4, q; 3 \right)\mbox{-LRC}$$ and  $\cC_{D_2}^\perp$ is a $$\left(q, q-4,4,q;q-5\right)\mbox{-LRC.}$$
Besides, $\cC_{D_2}$ and $\cC_{D_2}^\perp$ are both $d$-optimal and $k$-optimal.
\end{theorem}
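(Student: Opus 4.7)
The plan is to deduce both localities from Lemma \ref{locality} using the design structure already established, and then verify the two optimality conditions by direct comparison with the Singleton-like and Cadambe-Mazumdar bounds.

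First I would settle the locality of $\cC_{D_2}$. From Theorem \ref{h=2} (or already from Theorem \ref{thm-01} with $h=2$, $p=2$), the minimum weight codewords of $\cC_{D_2}^{\perp}$ support a $3$-$(q,4,1)$ Steiner system, hence in particular a $1$-design on the full coordinate set. Lemma \ref{locality} then yields that the minimum locality of $\cC_{D_2}$ equals $d(\cC_{D_2}^{\perp})-1=4-1=3$. Symmetrically, Theorem \ref{h=2} states that the minimum weight codewords of $\cC_{D_2}$ support a $3$-$(q,q-4,\tfrac{(q-4)(q-5)(q-6)}{24})$ design, which is again a $1$-design; applying Lemma \ref{locality} to $\cC_{D_2}^{\perp}$ gives minimum locality $d(\cC_{D_2})-1=(q-4)-1=q-5$.

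Next I would verify $d$-optimality via the Singleton-like bound \eqref{eqn-slbound}. For $\cC_{D_2}$ with $(n,k,d,r)=(q,4,q-4,3)$, the bound reads $d \leq q-4-\lceil 4/3\rceil+2 = q-4$, met with equality. For $\cC_{D_2}^{\perp}$ with $(n,k,d,r)=(q,q-4,4,q-5)$, since $q>5$ gives $\lceil (q-4)/(q-5)\rceil=2$, the bound reads $d \leq q-(q-4)-2+2 = 4$, again met with equality. Thus both codes are distance-optimal.

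Finally, I would check $k$-optimality via the Cadambe-Mazumdar bound \eqref{eqn-cmbound} by choosing $t=1$. For $\cC_{D_2}$ this gives $k \leq 3 + k_{opt}^{(q)}(q-4,q-4) = 3+1 = 4$, since any nonzero code of length $q-4$ with minimum distance $q-4$ is essentially a scaled repetition code and therefore has dimension $1$; this matches $k=4$. For $\cC_{D_2}^{\perp}$ the bound gives $k \leq (q-5) + k_{opt}^{(q)}(4,4) = (q-5)+1 = q-4$, again matching. Hence both codes are also dimension-optimal.

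The argument is essentially a bookkeeping step on top of Theorems \ref{thm-01} and \ref{h=2}, so there is no real obstacle; the only point requiring care is the verification $\lceil (q-4)/(q-5)\rceil = 2$ for the dual's Singleton-like bound, which fails for $q\leq 5$ and is why the hypothesis $m>2$ (so $q\geq 8$) from Theorem \ref{h=2} is used implicitly throughout.
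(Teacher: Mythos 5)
Your proposal is correct and follows essentially the same route as the paper: Lemma \ref{locality} applied to the $1$-design structure of the minimum-weight codewords gives the localities $3$ and $q-5$, and then the Singleton-like and Cadambe--Mazumdar bounds are checked directly (with $t=1$) exactly as in the paper's proof. The only difference is that you write out the dual's verification explicitly (including the point that $\lceil (q-4)/(q-5)\rceil=2$ requires $q>5$, guaranteed by $m>2$), which the paper dispatches with ``similarly.''
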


\begin{proof}
By Theorem \ref{h=2} and Lemma \ref{locality}, the minimum localities of $\cC_{D_2}$ and $\cC_{D_2}^\perp$ are $d(\cC_{D_2}^\perp)-1=3 $ and $d(\cC_{D_2})-1=q-5$, respectively.
Then  $\cC_{D_2}$ is a
$$\left(q, 4, q-4, q; 3 \right)\mbox{-LRC}$$ and $\cC_{D_2}^\perp$ is a $$\left(q, q-4,4,q;q-5\right)\mbox{-LRC.}$$
We then prove $\cC_{D_2}$ is both $d$-optimal and $k$-optimal. By Equation (\ref{eqn-slbound}),
\begin{eqnarray*}
& &q-4-\left\lceil\frac{4}{3}\right\rceil+2\\
&=&q-4.
\end{eqnarray*}
Hence, $\cC_{D_2}$ is $d$-optimal.
Let $t=1$. Then
\begin{eqnarray*}
& &\min_{t\in \Bbb Z^+}\left[rt+k_{opt}^{(q)}\left(n-t(r+1),q-4\right)\right]\\
&=&3+k_{opt}^{(q)}(q-4,q-4)=4.
\end{eqnarray*}
Where the last equality holds due to $k_{opt}^{(q)}(q-4,q-4)=1$ by the Singleton bound.
By Equation (\ref{eqn-cmbound}), $\cC_{D_2}$ is $k$-optimal.
Similarly, we can prove $\cC_{D_2}^\perp$ is both $d$-optimal and $k$-optimal.
\end{proof}

Similarly, we can easily prove the following four theorems.

\begin{theorem}\label{th-LLRC-0'}
Let $\cC_{D_h}$ be the code in Theorem \ref{h=2'} with $h=2,p>2$. Then $\cC_{D_h}$ is a $$\left(q, 4, q-p^2, q; 2\right)\mbox{-LRC}$$ and  $\cC_{D_h}^\perp$ is a $$\left(q, q-4,3,q;q-p^2-1\right)\mbox{-LRC.}$$
Besides, $\cC_{D_h}^\perp$ is almost $d$-optimal.
\end{theorem}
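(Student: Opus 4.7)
The proposal is to establish the two locality claims by invoking Lemma \ref{locality} and then to evaluate the Singleton-like bound for $\cC_{D_h}^\perp$ to verify almost $d$-optimality.

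First, I would prove the locality statement for $\cC_{D_h}$. By Theorem \ref{thm-p2}, when $p>2$ the dual code $\cC_{D_h}^\perp$ has minimum distance $d^\perp=3$, and its minimum-weight codewords support a $2$-$(q,3,p-2)$ design. Any $2$-design is automatically a $1$-design with $\lambda_1^\perp\ge 1$, so Lemma \ref{locality} gives the minimum locality of $\cC_{D_h}$ as $d^\perp-1=2$. Combined with the parameters $[q,4,q-p^2]$ inherited from Theorem \ref{h=2'}, this yields the $(q,4,q-p^2,q;2)$-LRC claim.

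Second, for $\cC_{D_h}^\perp$ I would apply Lemma \ref{locality} to $\cC_{D_h}^\perp$ itself, whose dual is $\cC_{D_h}$. Theorem \ref{thm-p2} shows $\cC_{D_h}$ is affine-invariant, so Theorem \ref{AFF} guarantees that the codewords of any fixed nonzero weight, and in particular the minimum-weight codewords of weight $q-p^2$, support a $2$-design, hence also a $1$-design. Thus the minimum locality of $\cC_{D_h}^\perp$ equals $d(\cC_{D_h})-1=q-p^2-1$, giving the $(q,q-4,3,q;q-p^2-1)$-LRC claim.

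Finally, for almost $d$-optimality of $\cC_{D_h}^\perp$, I would substitute $n=q$, $k=q-4$, $r=q-p^2-1$ into the Singleton-like bound \eqref{eqn-slbound}. The key auxiliary computation is
\[
\left\lceil \frac{k}{r}\right\rceil = \left\lceil \frac{q-4}{q-p^2-1}\right\rceil = 2,
\]
where the lower bound uses $p^2>3$ so that the ratio strictly exceeds $1$, and the upper bound uses $q=p^m\ge p^3\ge 2p^2-6$ (valid for $p\ge 3$, $m\ge 3$) so that the ratio is at most $2$. Plugging in gives $d\le q-(q-4)-2+2=4$, while the actual minimum distance is $3$; hence the code is within one of the bound, i.e., almost $d$-optimal. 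The only mild obstacle is checking the ceiling computation, but this is a purely arithmetic step that is straightforward once the range of $p$ and $m$ from Theorem \ref{h=2'} is kept in mind.
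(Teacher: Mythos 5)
Your proof is correct and follows essentially the same route as the paper, which proves only the template case (Theorem \ref{th-LLRC-0}) and disposes of this one with ``similarly'': Lemma \ref{locality} applied to both $\cC_{D_h}$ and $\cC_{D_h}^\perp$ via the $2$-designs guaranteed by Theorems \ref{thm-p2} and \ref{AFF}, followed by evaluating the Singleton-like bound (\ref{eqn-slbound}) to get $d\le 4$ against the true minimum distance $3$. One cosmetic slip: the inequality needed for $\left\lceil \frac{q-4}{q-p^2-1}\right\rceil \le 2$ is $q\ge 2p^2-2$, not $q\ge 2p^2-6$, but this still follows from $q=p^m\ge p^3$ for $p\ge 3$, so the conclusion stands.
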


\begin{theorem}\label{th-LLRC-1}
Let $\cC_{D_h}$ be the code in Theorem \ref{thm-1}, where $h=3$. Then $\cC_{D_h}$ is a $$\left(q, 5, q-8, q; 3 \right)\mbox{-LRC}$$ and  $\cC_{D_h}^\perp$ is a $$\left(q, q-5,4,q;q-9\right)\mbox{-LRC.}$$
Besides,$\cC_{D_h}^\perp$ is almost $d$-optimal.
\end{theorem}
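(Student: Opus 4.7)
The plan is to mirror the argument already used for Theorem \ref{th-LLRC-0} and Theorem \ref{th-LLRC-0'}, since the structural ingredients we need are all in place from Theorems \ref{thm-01} and \ref{thm-1}. First I would record the relevant parameters: by Theorem \ref{thm-1}, $\cC_{D_h}$ with $h=3$ and $p=2$ is a $[q,5,q-8]$ code, and by Theorem \ref{thm-01} its dual $\cC_{D_h}^\perp$ is a $[q,q-5,4]$ code. These immediately give the $n$, $k$, $d$ triples appearing in the two LRC claims.

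Next I would compute the two localities via Lemma \ref{locality}. The hypothesis of that lemma requires that the supports of the minimum weight codewords of the \emph{dual} form a $1$-design. For $\cC_{D_h}$, we need this for $\cC_{D_h}^\perp$: Theorem \ref{thm-01} tells us that the weight-$4$ codewords of $\cC_{D_h}^\perp$ support a $3$-$(q,4,1)$ Steiner system, which is in particular a $1$-design, so the minimum locality of $\cC_{D_h}$ is $d(\cC_{D_h}^\perp)-1=3$. For $\cC_{D_h}^\perp$, we need the analogous property for $(\cC_{D_h}^\perp)^\perp=\cC_{D_h}$; by Theorem \ref{thm-01} the code $\cC_{D_h}$ is affine-invariant, hence the supports of codewords of any fixed nonzero weight (in particular weight $q-8$) form a $2$-design and therefore a $1$-design. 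Thus the minimum locality of $\cC_{D_h}^\perp$ is $d(\cC_{D_h})-1=q-9$.

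Finally, for almost $d$-optimality of $\cC_{D_h}^\perp$, I would plug the triple $(n,k,r)=(q,q-5,q-9)$ into the Singleton-like bound (\ref{eqn-slbound}):
\begin{equation*}
n-k-\left\lceil \frac{k}{r}\right\rceil+2 = 5-\left\lceil \frac{q-5}{q-9}\right\rceil+2 = 7-\left\lceil 1+\frac{4}{q-9}\right\rceil.
\end{equation*}
Since $p=2$ and $m>3$ give $q\geq 16$, we have $0<4/(q-9)<1$, so the ceiling equals $2$ and the bound reads $d\leq 5$. Our $d=4$ is exactly one below this bound, which is what is meant by almost $d$-optimal.

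I do not expect any real obstacle here. The only step that requires a moment of care is verifying that the ceiling simplifies to $2$ for the admissible range of $q$, i.e., $q-9>4$; this is automatic from $m>3$. All other facts (the parameters of $\cC_{D_h}$ and $\cC_{D_h}^\perp$, the affine-invariance, and the $1$-design condition underlying Lemma \ref{locality}) are inherited directly from the earlier theorems and need no new computation.
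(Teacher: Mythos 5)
Your proposal is correct and follows exactly the route the paper intends: the paper omits the proof of this theorem (``Similarly, we can easily prove\dots''), deferring to the template of Theorem \ref{th-LLRC-0}, and your argument instantiates that template faithfully --- localities via Lemma \ref{locality} using the Steiner system $S(3,4,q)$ for $\cC_{D_h}^\perp$ and affine-invariance for $\cC_{D_h}$, then the Singleton-like bound giving $d\leq 5$ versus the actual $d=4$ for almost $d$-optimality. The only detail worth keeping is the one you already flagged, namely that $m>3$ forces $q\geq 16$ so the ceiling $\left\lceil (q-5)/(q-9)\right\rceil$ equals $2$.
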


\begin{theorem}\label{th-LLRC-1'}
Let $\cC_{D_h}$ be the code in Theorem \ref{th-hdivm} with $h\mid m$. 
\begin{enumerate}
\item If $p=2$, then $\cC_{D_h}$ is a $$\left(q, h+2, q-2^h, q; 3 \right)\mbox{-LRC}$$ and  $\cC_{D_h}^\perp$ is a $$\left(q, q-h-2, 4, q; q-2^h-1\right)\mbox{-LRC.}$$ Besides, when $h=1$ or $2$, $\cC_{D_h}$ and $\cC_{D_h}^\perp$ are both $d$-optimal and $k$-optimal. When $h=3$, $\cC_{D_h}^\perp$ is almost $d$-optimal.
\item If $p>2$, then $\cC_{D_h}$ is a $$\left(q, h+2, q-p^h, q; 2 \right)\mbox{-LRC}$$ and  $\cC_{D_h}^\perp$ is a $$\left(q, q-h-2, 3, q; q-p^h-1\right)\mbox{-LRC.}$$ Besides, when $p=3$, $h=1$, $\cC_{D_h}$ is both $d$-optimal and $k$-optimal, and $\cC_{D_h}^\perp$ is $k$-optimal. When $h=1$, $\cC_{D_h}^\perp$ is $d$-optimal. When $h=2$, $\cC_{D_h}^\perp$ is almost $d$-optimal.
\end{enumerate}
\end{theorem}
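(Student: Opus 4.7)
The plan is to assemble the parameters of $\cC_{D_h}$ and $\cC_{D_h}^\perp$ from the earlier structural theorems, then use Lemma~\ref{locality} to read off the minimum localities, and finally run through a short case analysis against the Singleton-like bound~(\ref{eqn-slbound}) and the Cadambe-Mazumdar bound~(\ref{eqn-cmbound}).

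First, by Theorem~\ref{th-hdivm} the hypothesis $h\mid m$ forces $d(\cC_{D_h})=q-p^h$; combined with Theorem~\ref{thm-01} (resp.\ Theorem~\ref{thm-p2}) this gives the parameters of $\cC_{D_h}^\perp$, namely $[q,q-h-2,4]$ if $p=2$ and $[q,q-h-2,3]$ if $p$ is odd. Since $\cC_{D_h}$ is affine-invariant, $\mathrm{GA}_1(\gf_q)\subseteq \mathrm{PAut}(\cC_{D_h})$ acts transitively on coordinates; applying this to both $\cC_{D_h}$ and $\cC_{D_h}^\perp$ shows the minimum weight codewords of either code form a $1$-design with $\lambda_1\geq 1$. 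Lemma~\ref{locality} then yields the localities $r=3,\,q-2^h-1$ in the even case and $r=2,\,q-p^h-1$ in the odd case. So far the argument is straightforward plug-in from the earlier results.

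The real work is verifying the optimality claims, which I would do separately in each subcase. For part~(1) with $h=1$ we have $\cC_{D_1}=[q,3,q-2,q;3]$: the Singleton-like bound gives $d\leq q-3-\lceil 3/3\rceil+2=q-2$, and with $t=1$ the CM bound gives $k\leq 3+k_{\mathrm{opt}}^{(q)}(q-4,q-2)=3$, so both optima are met; the dual $[q,q-3,4,q;q-3]$ is handled symmetrically via $k_{\mathrm{opt}}^{(q)}(2,4)=0$. For $h=2$ (still $p=2$) the code is exactly that of Theorem~\ref{th-LLRC-0}, so the conclusion is already proved. For $h=3$, $p=2$ the dual $[q,q-5,4,q;q-9]$ satisfies $\lceil (q-5)/(q-9)\rceil=2$ once $q\geq 14$, giving Singleton-like upper bound $5$ versus actual $4$, hence almost $d$-optimal. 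For part~(2), $p=3,h=1$ yields $\cC_{D_1}=[q,3,q-3,q;2]$: Singleton-like $d\leq q-3-\lceil 3/2\rceil+2=q-3$ and CM $k\leq 2+k_{\mathrm{opt}}^{(q)}(q-3,q-3)=3$, and the dual is analogous. For general odd $p$ and $h=1$, the dual $[q,q-3,3,q;q-p-1]$ has $\lceil (q-3)/(q-p-1)\rceil=2$ for $q$ large enough, giving Singleton-like bound $3$, proving $d$-optimality. For $h=2$, the same ceiling computation gives Singleton-like bound $4$ versus actual $3$, hence almost $d$-optimality.

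The main obstacle will be handling the ceiling function $\lceil k/r\rceil$ in the Singleton-like bound uniformly and pinning down $k_{\mathrm{opt}}^{(q)}(n,d)$ for the small residual instances in the CM bound; the latter reduces in every case considered to a Singleton-type statement ($k_{\mathrm{opt}}^{(q)}(n,d)=n-d+1$ when the instance is MDS, or $0$ when $d>n$), so no deeper coding-theoretic input is actually needed. The rest is careful bookkeeping, subcase by subcase, using only the ingredients already established in Theorems~\ref{thm-01}, \ref{thm-p2}, \ref{th-hdivm} and Lemma~\ref{locality}.
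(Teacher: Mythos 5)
Your proposal is correct and follows essentially the same route as the paper: the paper proves the template case (Theorem \ref{th-LLRC-0}) by combining the structural theorems with Lemma \ref{locality}, the Singleton-like bound (\ref{eqn-slbound}) with the appropriate ceiling evaluation, and the Cadambe--Mazumdar bound (\ref{eqn-cmbound}) at $t=1$ with $k_{\mathrm{opt}}^{(q)}$ pinned down by the Singleton bound, and then states that Theorem \ref{th-LLRC-1'} follows "similarly," which is exactly the subcase bookkeeping you outline. Your case computations check out (the only nit is that the $h=3$, $p=2$ ceiling argument already works for $q\geq 13$, and $q\geq 16$ holds anyway since $3<m$).
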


\begin{theorem}\label{th-LLRC-1''}
Let $\cC_{D_h}$ be the code in Theorem \ref{th-h=m-1} with $h=m-1$.
\begin{enumerate}
\item If $p=2$, then $\cC_{D_h}$ is a $$\left(q, m+1, q-2^{m-1}, q; 3 \right)\mbox{-LRC}$$ and  $\cC_{D_h}^\perp$ is a $$\left(q, q-m-1, 4, q; q-2^{m-1}-1\right)\mbox{-LRC.}$$ Besides, when $h=1$ or $2$, $\cC_{D_h}$ and $\cC_{D_h}^\perp$ are both $d$-optimal and $k$-optimal. When $h=3$, $\cC_{D_h}^\perp$ is almost $d$-optimal.
\item If $p>2$, then  $\cC_{D_h}$ is a $$\left(q, m+1, q-p^{m+1}, q; 2 \right)\mbox{-LRC}$$ and  $\cC_{D_h}^\perp$ is a $$\left(q, q-m-1, 3, q; q-p^{m-1}-1\right)\mbox{-LRC.}$$ Besides, when $p=3$, $h=1$, $\cC_{D_h}$ is both $d$-optimal and $k$-optimal, and $\cC_{D_h}^\perp$ is $k$-optimal. When $h=1$, $\cC_{D_h}^\perp$ is $d$-optimal. When $h=2$, $\cC_{D_h}^\perp$ is almost $d$-optimal.
\end{enumerate}
\end{theorem}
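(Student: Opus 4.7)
The plan is to combine three ingredients already available in the paper: the parameters of $\cC_{D_h}$ with $h=m-1$ from Theorem \ref{th-h=m-1}, the dual distance and design information from Theorems \ref{thm-01} (for $p=2$) and \ref{thm-p2} (for $p>2$), and Lemma \ref{locality}, which identifies the minimum locality with $d^\perp-1$ whenever the minimum-weight codewords of the dual support a $1$-design with positive replication number.

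First I would invoke Theorem \ref{th-h=m-1} to record that $\cC_{D_h}$ has parameters $[q,\,m+1,\,q-p^{m-1}]$. Theorems \ref{thm-01} and \ref{thm-p2} then give $d(\cC_{D_h}^\perp)=4$ when $p=2$ and $d(\cC_{D_h}^\perp)=3$ when $p>2$, and also show that $\cC_{D_h}$ is affine-invariant and that the minimum-weight codewords of $\cC_{D_h}^\perp$ support a $t$-design with $t\ge 2$. In particular, the minimum-weight supports in both codes form $1$-designs with positive replication number, so Lemma \ref{locality} applies twice and yields minimum locality $d(\cC_{D_h}^\perp)-1$ for $\cC_{D_h}$ (which is $3$ when $p=2$ and $2$ when $p>2$) and minimum locality $d(\cC_{D_h})-1=q-p^{m-1}-1$ for $\cC_{D_h}^\perp$. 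This establishes the LRC parameter statements in both parts (1) and (2).

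Next I would verify the optimality claims by substituting the LRC parameters into the Singleton-like bound (\ref{eqn-slbound}) and the Cadambe-Mazumdar bound (\ref{eqn-cmbound}), following the pattern already used in Theorems \ref{th-LLRC-0}--\ref{th-LLRC-1'}. For each listed sub-case, the computation reduces to evaluating $\lceil k/r\rceil$ and a single instance of $k_{opt}^{(q)}(n-t(r+1),d)$ with $t=1$; the latter is in turn controlled by the Singleton bound $k_{opt}^{(q)}(n',d)\le n'-d+1$. For example, when $p=2$ and $h=1$ we have $m=2$, $q=4$, and $\cC_{D_h}$ of parameters $[4,3,2]$ with $r=3$: the Singleton-like bound gives $d\le 4-3-\lceil 3/3\rceil+2=2$, and the Cadambe-Mazumdar bound with $t=1$ gives $k\le 3+k_{opt}^{(4)}(0,2)=3$, so both bounds are attained. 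The remaining branches --- $p=2$ with $h\in\{2,3\}$ and $p>2$ with $h\in\{1,2\}$, together with the separate $p=3,\,h=1$ sub-case --- are dispatched in exactly the same way, the only distinction between \emph{optimal} and \emph{almost optimal} being whether the bound is met exactly or missed by one.

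The main obstacle is bookkeeping rather than mathematical novelty: each sub-claim requires correctly computing the ceilings $\lceil k/r\rceil$ (note in particular that $r$ itself equals $q-p^{m-1}-1$ for the dual code, so that the ratio $k/r$ is very close to $1$ and the ceiling is sensitive to the exact arithmetic), together with the corresponding $k_{opt}^{(q)}$ value, and then comparing against the achieved $d$ or $k$. No new structural insight is required beyond what is already contained in the proofs of Theorems \ref{thm-01}, \ref{thm-p2}, \ref{th-h=m-1} and the preceding LRC theorems, which is precisely why the author announces that the four theorems can be proved ``similarly''.
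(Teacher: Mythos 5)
Your proposal is correct and follows essentially the same route as the paper, which offers no separate argument for this theorem beyond the remark that it is proved ``similarly'' to Theorem \ref{th-LLRC-0}: parameters from Theorem \ref{th-h=m-1}, dual distances and the design property from Theorems \ref{thm-01} and \ref{thm-p2} feeding into Lemma \ref{locality}, and then case-by-case verification of the bounds (\ref{eqn-slbound}) and (\ref{eqn-cmbound}) with $t=1$ and the Singleton bound controlling $k_{opt}^{(q)}$. You also correctly read the minimum distance in part (2) as $q-p^{m-1}$ (the exponent $m+1$ in the statement is a typo), and your sample computation for $p=2$, $h=1$ checks out.
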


The minimum locality of $\C$ in Theorem \ref{the-02} is also studied in the following theorem.

\begin{theorem}\label{th-LLRC-2}
Let $\cC$ be the code in Theorem \ref{the-02}. Then $\cC$ is a $$\left(q+1, 4, q-p^l, q; 3 \right)\mbox{-LRC}$$ and  $\cC^\perp$ is a $$\left(q+1, q-3,4,q;q-p^l-1\right)\mbox{-LRC.}$$
Besides, $\cC$ is both $d$-optimal and $k$-optimal for $p=3,l=1$ and $\cC^\perp$ is both $d$-optimal and $k$-optimal for all odd prime $p$ and all $l=\gcd(m,s)$.
\end{theorem}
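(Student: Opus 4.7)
The plan is to derive everything from Theorem~\ref{the-02} together with Lemma~\ref{locality} and the two well-known bounds, Singleton-like \eqref{eqn-slbound} and Cadambe-Mazumdar \eqref{eqn-cmbound}. The key point is that the 3-design conclusions in Theorem~\ref{the-02} immediately provide 1-designs (any $t$-design is a $1$-design whenever $t\geq 1$), so the hypothesis of Lemma~\ref{locality} is satisfied on both sides.

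First I would read off from Theorem~\ref{the-02} that $\cC$ has parameters $[q+1,4,q-p^l]$ with $d(\cC^{\perp})=4$, and that the minimum weight codewords of $\cC^{\perp}$ support a $3$-$(q+1,4,p^l-2)$ simple design. Since $p$ is odd and $l\geq 1$ we have $p^l\geq 3$, so $p^l-2\geq 1$, hence the supports of the minimum weight codewords of $\cC^{\perp}$ form a (nontrivial) $1$-design as required by Lemma~\ref{locality}. Applying Lemma~\ref{locality} gives the minimum locality of $\cC$ equal to $d(\cC^{\perp})-1=3$, which proves $\cC$ is a $(q+1,4,q-p^l,q;3)$-LRC. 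Symmetrically, Theorem~\ref{the-02} tells us the minimum weight codewords of $\cC$ support a $3$-design on $q+1$ points with blocks of size $q-p^l$, hence in particular a $1$-design, so Lemma~\ref{locality} gives the minimum locality of $\cC^{\perp}$ equal to $d(\cC)-1=q-p^l-1$, yielding the claimed $(q+1,q-3,4,q;q-p^l-1)$-LRC structure.

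Next I would verify optimality. For $\cC^{\perp}$, the Singleton-like bound \eqref{eqn-slbound} reads
\begin{eqnarray*}
d(\cC^{\perp})\leq (q+1)-(q-3)-\left\lceil\frac{q-3}{q-p^l-1}\right\rceil+2
= 6-\left\lceil\frac{q-3}{q-p^l-1}\right\rceil.
\end{eqnarray*}
Since $l$ is a proper divisor of $m$ one has $p^l\leq p^{m/2}$, in particular $q\geq p^{2l}\geq 3p^l>2p^l-1$, which forces $1<\frac{q-3}{q-p^l-1}<2$, so the ceiling is exactly $2$ and the bound gives $d(\cC^{\perp})\leq 4$, matched by $d(\cC^{\perp})=4$; this is the $d$-optimality. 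For $k$-optimality I plug $t=1$ into \eqref{eqn-cmbound}, using the Singleton inequality $k_{\mathrm{opt}}^{(q)}(p^l+1,4)\leq p^l-2$, to obtain
\begin{eqnarray*}
k \leq (q-p^l-1)+k_{\mathrm{opt}}^{(q)}(p^l+1,4)\leq (q-p^l-1)+(p^l-2)=q-3,
\end{eqnarray*}
which matches $\dim(\cC^{\perp})=q-3$.

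Finally, for $\cC$ in the case $p=3$, $l=1$, the Singleton-like bound \eqref{eqn-slbound} gives $d\leq (q+1)-4-\lceil 4/3\rceil+2=q-3$, and the actual $d=q-p^l=q-3$ meets it. For the $k$-optimality, taking $t=1$ in \eqref{eqn-cmbound} and again using Singleton gives $k\leq 3+k_{\mathrm{opt}}^{(q)}(q-3,q-3)=3+1=4$, matching $\dim(\cC)=4$. The main obstacle I foresee is simply the arithmetic bookkeeping in these optimality verifications, in particular evaluating the ceiling in the Singleton-like bound uniformly in all admissible pairs $(p,l)$ and choosing the right $t$ in the Cadambe-Mazumdar bound; the locality portion is an essentially immediate application of Lemma~\ref{locality} once the $3$-design conclusions of Theorem~\ref{the-02} are in hand.
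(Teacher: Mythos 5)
Your proposal is correct and follows essentially the same route as the paper: Lemma~\ref{locality} combined with Theorem~\ref{the-02} for the two localities, then the Singleton-like bound with the ceiling evaluated to $2$ (via $q-p^l-1<q-3\leq 2(q-p^l-1)$) and the Cadambe--Mazumdar bound with $t=1$ for optimality. You additionally spell out two details the paper leaves implicit --- that the $3$-designs furnish the $1$-design hypothesis of Lemma~\ref{locality}, and the explicit $p=3$, $l=1$ computation for $\cC$ --- which is fine but not a different approach.
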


\begin{proof}
By Theorem \ref{the-02} and Lemma \ref{locality}, the minimum localities of $\cC$ and $\cC^\perp$ are $d(\cC^\perp)-1=3 $ and $d(\cC)-1=q-p^l-1$, respectively.
Then we directly derive that $\cC$ is a $$\left(q+1, 4, q-p^l, q; 3 \right)\mbox{-LRC}$$ and  $\cC^\perp$ is a $$\left(q+1, q-3,4,q;q-p^l-1\right)\mbox{-LRC.}$$

In the following, we prove $\C^\perp$ is both $d$-optimal and $k$-optimal.
\begin{eqnarray*}
& &q+1-(q-3)-\left\lceil\frac{q-3}{q-p^l-1}\right\rceil+2\\
&=&6-\left\lceil\frac{q-3}{q-p^l-1}\right\rceil=4.
\end{eqnarray*}
where the last equality holds due to $q-p^l-1 < q-3 \leq 2(q-p^l-1) $. Hence, $\C^\perp$ is  $d$-optimal by Equation (\ref{eqn-slbound}).
Let $t=1$. Then
\begin{eqnarray*}
& &\min_{t\in \Bbb Z^+}\left[rt+k_{opt}^{(q)}\left(n-t(r+1),4\right)\right]\\
&=&q-p^l-1+k_{opt}^{(q)}(p^l+1,4)=q-3.
\end{eqnarray*}
where the last equality holds due to $k_{opt}^{(q)}(p^l+1,4)=p^l-2$ by the Singleton bound.
Therefore, $\C^\perp$ is $k$-optimal by Equation (\ref{eqn-cmbound}).

Similarly, we can prove $\cC$ is both $d$-optimal and $k$-optimal when $p=3,l=1$. The proof is completed.
\end{proof}
\section{Summary and concluding remarks}\label{sec6}
In this paper, we constructed a family of extended primitive cyclic codes and a family of reducible cyclic codes by special polynomials. The parameters of them and their duals were determined. It was shown that these codes have nice applications in combinatorial designs and locally recoverable codes. Besides, a conjecture was given in Conjecture \ref{conj} and an open problem was proposed in Remark \ref{rem}. The reader is invited to solve them.




\end{document}